\documentclass[12pt]{article}    % I prefer "article" over "amsart" for homework. "report" also looks nice
\pdfminorversion=4
\usepackage{amssymb, amsmath, amsthm, mathtools} % math stuff
\usepackage{graphicx, dsfont, enumerate, color, setspace} % some general stuff
\usepackage{url}
\usepackage{caption}
\usepackage{subcaption}
\usepackage{outline}
\usepackage{adjustbox}

\usepackage[colorlinks = true,
            linkcolor = blue,
            urlcolor  = blue,
            citecolor = blue,
            anchorcolor = blue]{hyperref}

\usepackage{authblk}

% some rules to try to control page layout
\usepackage{fullpage} % sets margins a bit smaller
%\raggedbottom % prevents adding of vertical space to fill pages, which looks terrible

% ##### Bibliography options
\usepackage{natbib}
\bibpunct{(}{)}{;}{a}{,}{,}

\usepackage{titlesec}
\titleformat{\subsubsection}[runin]
{\normalfont\em}{\thesubsubsection}{1em}{}
\newcommand{\subsub}[1]{\subsubsection{#1.}}

\newtheorem{remark}{Remark}
\newtheorem{lemma}{Lemma}
\newtheorem{theorem}{Theorem}
\newtheorem{proposition}{Proposition}
\newtheorem{assumption}{Assumption}
\newtheorem{definition}{Definition}

\DeclareMathOperator*{\argmax}{argmax}
\DeclareMathOperator*{\argmin}{argmin}
\newcommand{\one}{\mathbf{1}}
\newcommand{\zero}{\mathbf{0}}
\newcommand{\be}{\textbf{e}}
\newcommand{\bZ}{\textbf{Z}}
\newcommand{\bh}{\textbf{h}}
\newcommand{\mZ}{\underline{\textbf{Z}}}
\newcommand{\bY}{\textbf{Y}}
\newcommand{\by}{\textbf{y}}
\newcommand{\mY}{\underline{\textbf{Y}}}
\newcommand{\mbY}{\underline{\mathds{Y}}}

\newcommand{\Yset}{\mathcal{Y}}

\newcommand{\Cset}{\mathcal{C}}
\newcommand{\Tset}{\mathcal{T}}
\newcommand{\iv}{\mathds{1}}
\newcommand{\Zit}{Z_{it}}
\newcommand{\Zitall}{Z_{i, 1:t}}

\newcommand{\Pulses}{\mathcal{E}}
\newcommand{\Wedges}{\mathcal{E}^\ast}
\newcommand{\PulsesSupport}{\mathcal{Z}(\mathcal{E})}
\newcommand{\WedgesSupport}{\mathcal{Z}(\mathcal{E}^\ast)}
\newcommand{\PulsesDesign}{\mathds{H}}
\newcommand{\WedgesDesign}{\mathds{H}^\ast}

\newcommand{\bI}{\mathds{I}}
\newcommand{\bw}{\textbf{w}}

\newcommand{\Sset}{\mathsf{S}}

\newcommand{\bx}{\textbf{x}}
\newcommand{\byy}{\textbf{y}}
\newcommand{\bz}{\textbf{z}}

\newcommand{\ATE}{\mathrm{ATE}}

% \newcommand{\remarks}{{\em Remarks}.~}
%\newcounter{remark}[section]
% \newenvironment{remark}[1][]{
% \refstepcounter{remark}
% \par\medskip\noindent
%    \text{\textit{Remark~\thesection.\theremark.}}~#1\rmfamily}
   
   \newcommand{\dyn}{habituation }
   \newcommand{\dyneff}{\dyn effect}
      \newcommand{\dyneffs}{\dyn effects}

 \title{Minimax designs for causal effects \\ in temporal experiments 
 with treatment habituation}

\author[1]{Guillaume Basse}
\affil[1]{Stanford University, Department of MS\&E and Department of Statistics}
\author[2]{Yi Ding}
\affil[2]{University of Chicago, Department of Computer Science}
\author[3]{Panos Toulis}
\affil[3]{University of Chicago, Booth School of Business}

\date{\today}

\begin{document}

\maketitle

%\tableofcontents

\begin{abstract}
  Randomized experiments are the gold standard for estimating the causal effects
  of an intervention. In the simplest setting, each experimental unit is randomly
  assigned to receive treatment or control, and then the outcomes in each treatment
  arm are compared. 
  % GWB – removes `continually' 
  In many settings, however, randomized experiments need to be executed over 
  several time periods
  such that treatment assignment happens at each time period. 
  % GWB - replace `usually' with `may'
  In such temporal experiments, it has been observed that the effects of an
  intervention on a given unit may be large when the unit is first exposed to it,
  but then it often attenuates, or even vanishes, after repeated exposures. 
This phenomenon is typically due to units' habituation to the intervention, 
or some other general form of learning, such as when users gradually start to ignore repeated mails sent by a promotional campaign.
This paper proposes randomized designs for estimating causal effects in
temporal experiments when habituation is present. We show that our 
designs are minimax optimal in a large class
of practical designs. Our analysis is based on the randomization framework of causal inference, and imposes no parametric modeling assumptions on  the outcomes.
  
  % In crossover experiments, two broad classes of treatment effects are typically considered: 
  % direct effects that capture the instantaneous impact of the treatment, and carryover effects that
  % capture the lagged impact of past treatments. 
  % Existing approaches to optimal crossover design usually minimize a criterion that relies 
  % on an outcome model, and on assumptions that carryover effects are limited, say, to one or two time periods.
  % The latter assumption is problematic when long-range carryover effects are expected, and
  % are of primary interest. 
  % In this paper, we derive minimax optimal
  % designs for estimating both direct and carryover effects simultaneously.
  % In contrast to prior work, our minimax designs do not require specifying a model for the outcomes,
  %  relying instead on invariance assumptions. This allows us to address problems with arbitrary carryover structure, 
  %  such as those encountered in digital experimentation.
\end{abstract}

\newpage
\small
\tableofcontents

%%%%%%%%%%%%%%%%%%%%%%%%%%%%%%%%%%%%%%%%%%%%%%%%%%%%%%%%%%%%%%%%%%%%%%%%%%%

\onehalfspacing

\normalsize
\section{Introduction}
\label{sec:intro}

In many causal inference problems, it is of interest to understand how treatment 
effects vary over time. This concerns, for example, the evaluation of the long-term impacts
of public policies~\citep{athey2019surrogate, allcott2014short, imai2011experimental, sjolander2016carryover, hainmueller2014causal}, or applications where treatment effects may carry over from earlier intervention periods, as in clinical trials~\citep{copas2015designing, wellek2012proper, brown1980crossover}. In this context, the main challenge is that a unit's current outcome may be affected not just by its current treatment, but also by its past treatments. 

For instance, in a recent paper, \cite{allcott2014short} studied the impact of
mailing a customized energy report to households across the United States on their energy consumption. They repeated this treatment every month on the same households over multiple years, and recorded their monthly energy consumption over the same period. One question at the center of their study was the effect of this repeated treatment,
and in particular whether households had habituated to the treatment.
Indeed, the authors observed that while households' energy consumption
% PTA - changed to "first monthly report"
% GWB – Ok
was reduced upon receiving the first monthly report, the magnitude of the effect
shrank dramatically after receiving just a few more reports. Other experiments
studying the impact of behavioral interventions for energy
conservation have made similar observations \citep{abrahamse2005review,
  hahn2016impact, allcott2010behavior}. In fact, this phenomenon of habituation
after repeated exposure to an intervention has been reported in a variety of
contexts, including online advertising
\citep{chatterjee2003modeling, hohnhold2015focusing, yan2019measuring}, marketing
\citep{wathieu2004consumer, liberali2011effects}, and ergonomics
\citep{kim2009habituation}. For instance, in the context of online advertising,
\cite{hohnhold2015focusing} ran a number of experiments to measure
``ads blindness", which is a phenomenon describing the behavior of online 
users who, when repeatedly exposed to low-quality ads, progressively learn to
ignore them. This paper considers the problem of designing optimal experiments
for measuring effects of this kind.

Several experimental designs have been applied in practice to quantify habituation
under repeated treatments.~\cite{hohnhold2015focusing} proposed two designs –––
the ``Post-Period design" and the ``Cookie--Cookie-Day Design" ––– for
capturing and assessing this phenomenon. The literature on behavioral
interventions (e.g.,~\cite{allcott2014short}) also uses a variety of designs with
similar ideas. 
However, the design question is not studied formally in this literature, and
optimality properties are not discussed.

\newpage
In contrast, design optimality is at the heart of the literature
on sequential personalized experiments~\citep[for example]{lei2012smart},
where the objective is to adaptively construct a sequence of treatments
for each experimental unit in order to maximize an outcome of interest~(e.g. blood
pressure, quality of life), typically based on covariates and outcomes at
previous stages. 
% PTA - beefed up the comparison.
% GWB - !! it feels weird to compare the `approaches' when we haven't yet
% described what our paper actually does..
That setting, however, differs from ours in two fundamental
ways. First, the goal of maximizing unit responses is distinct from, and
in fact it may conflict with, the goal of estimating temporal causal effects.
Second, the literature on adaptive sequential experiments generally requires strong modeling assumptions, whereas our
approach makes no modeling assumption on the unit outcomes. 
Instead, we make use of the randomization framework in causal inference where the potential outcomes are fixed, and randomness comes only from the experimental design.

The goal of this paper is to formalize and develop minimax optimal designs for
estimating causal effects in temporal experiments where each unit can be exposed
to a different treatment at each time period, and the resulting outcomes are also
observed for all units at each time period.
% PTA - used the word "motivation".
% GWB – ok
%Our motivation comes from the experimental settings of~\citet{allcott2014short} and \citet{hohnhold2015focusing}, where at every time
%period each unit can be exposed to a different treatment, and also have its outcomes observed.
%
We adopt the potential outcomes framework of causal inference~\citep{rubin1974estimating, neyman1923application}, and the
randomization-based perspective on inference, extending both to the temporal
setting.
% PTA - added the qualifier "methodological"
% GWB – ok
Methodologically, our setup is therefore closer to that of~\citet{bojinov2019time} who also consider dynamic potential outcomes and causal effects in temporal experiments. The key difference is that~\citet{bojinov2019time} consider Fisherian randomization tests for hypotheses on the potential outcomes, whereas we address the design problem of minimax estimation of causal effects.
Our study of minimax optimality builds
on early work by \cite{wu1981minimax} and \cite{li1983minimaxity} who showed
that in the cross-sectional setting, the balanced completely randomized design is
minimax optimal. We extend their results to the temporal setting, but find that
the minimax optimal design is no longer balanced among all arms --- we provide
some intuition as to why the symmetry present in the cross-sectional setting is
broken in the temporal setting.

% Outline
The rest of this paper is structured as follows. In Section~\ref{sec:setup}, we describe our
framework in detail, and identify two types of temporal causal
estimands of interest: {\em \dyneffs}, as described above, and {\em instantaneous effects},
which measure the effect of a single exposure to the treatment at a given time
period. We develop the main theory in Section~\ref{sec:minimax}, and introduce
minimax optimal designs for jointly estimating \dyneff\ and instantaneous
effects. We discuss extensions of our basic design under alternative
sets of assumptions in Section~\ref{sec:extensions}, and illustrate our results
with simulations in Section~\ref{sec:simulations}.

\section{Setup and notation}
\label{sec:setup}

\subsection{Temporal experiments}
\label{sec:designs}
% In a temporal experiment an experimental unit may be assigned to different treatments at different points in time. We therefore consider experimental designs that randomize units not to one single treatment status, but over entire sequence of treatment statuses.
%
% Formally, we consider
We consider a temporal experiment with $N$ units, which is taking place over $T$ discrete periods,
indexed by $t=1, \ldots, T$. We assume that $T$ is fixed and known. At each time $t$, a unit $i$
can either receive treatment or control, denoted by $\Zit=1$ or $\Zit = 0$,
respectively. Vector
$\Zitall = (Z_{i1}, \ldots, \Zit)\in\{0, 1\}^t$ denotes the treatment history of $i$
up to time $t$. 
The full treatment history for unit $i$ is denoted by $\bZ_i = (Z_{i1}, \ldots, Z_{iT})$,
where the dependence on $T$ is left implicit; $\bZ \in \{0,1\}^{N\times T}$
denotes the $N\times T$ matrix of population treatment assignments, whose $i$-th
row is equal to $\bZ_i$. A temporal design is thus a distribution on $\bZ$.
%the set of population treatment assignment matrices. 

In our paper, certain assignment vectors play a special role, which requires
additional notation. Let $\one = (1, \ldots, 1)$ and $\zero = (0, \ldots, 0)$ be, respectively, the {\em always-treated} and {\em always-control} assignment
vectors of length $T$; as before, dependence on $T$ is left implicit to simplify notation. We call {\em pulse assignment at time $t$}, and denote by
$\be_t$, the assignment vector that treats a unit only at time $t$; that is,
\begin{equation*}
\be_t = (0, \ldots, \underbrace{1}_{\text{index}~t}, 0, \ldots,0).
\end{equation*}
Similarly, the {\em wedge assignment at time $t$}, denoted $\bw_t$, treats a unit continuously 
after $t$:
\begin{equation*}
  \bw_t = (0, \ldots, \underbrace{1}_{\text{index}~t}, \ldots,1)
= \sum_{s=t}^T \be_s.
\end{equation*}
The assignment vectors $\one$, $\zero$, $\{\be_t\}_{t=1}^T$ and $\{\bw_t\}_{t=1}^T$
are the building blocks for the following two classes of designs that will be our main focus of analysis.
\begin{definition}[Pulse and wedge designs]
  Let $\Pulses=\{\one, \zero, \be_1, \ldots, \be_T\}$.
  Denote by
  $\PulsesSupport = \{\bZ \in \{0,1\}^{N\times T}: \bZ_i \in \Pulses,~
  \forall i=1,\ldots, N\}$ the set of all assignment matrices for which every unit's
  assignment is in $\Pulses$. Similarly, let 
  $\Wedges=\{\one, \zero, \bw_1, \ldots, \bw_T\}$ and denote by 
  $\WedgesSupport = \{\bZ \in \{0,1\}^{N\times T}: \bZ_i \in \Wedges,~
  \forall i=1,\ldots, N\}$ the set of all assignment matrices for which every unit
  assignment is in $\Wedges$. 
  A pulse design is a probability distribution, $\eta(\bZ)$, with
  support on $\PulsesSupport$. A wedge design is a probability
  distribution, $\eta_{\mathrm{w}}(\bZ)$, with support on $\WedgesSupport$. 
  The sets of all possible pulse designs and wedge designs are denoted by
  $\PulsesDesign$ and $\WedgesDesign$, respectively.
\end{definition}

Pulse and wedge designs are the building blocks of our causal estimands
(see Section~\ref{sec:estimands}). They are natural classes
of designs to consider, as they arise in many applications other than the ones we have considered so far;
for instance, wedge designs are popular in clinical trials
\citep{brown2006stepped, prost2015logistic,hargreaves2015five}. In online advertising, the Cookie--Cookie-Day design~\citep{hohnhold2015focusing} is an example of a wedge design.

In the following section, we focus on pulse designs because they are conceptually
simpler, but our key results hold unchanged for wedge designs as well~(see Remark~\ref{remark:wedges}).

% The ``Cookie---Cookie-Day'' design (CCD) proposed by \cite{hohnhold2015focusing}
% in the context of online advertising is a example of wedge design.

% Both pulse and wedge designs are the building 
% blocks of our causal estimands in Section~\ref{sec:estimands}, and arise naturally in applications.
% For example, wedge designs are
% widely used in clinical trials
% \citep{brown2006stepped, prost2015logistic,hargreaves2015five}. A type of wedge
% design, called the ``cookie-cookie-day design", was recently proposed in digital
% experimentation to estimate cumulative carryover effects in online advertising~\citep{hohnhold2015focusing,yan2019measuring}.

%

\subsection{Potential outcomes and estimands}
\label{sec:po}
To define our causal estimands, we adapt the classical potential outcomes framework of causal
inference~\citep{neyman1923application,rubin1974estimating} to the temporal
setting, building on recent related work~\citep{ji2017randomization,bojinov2019time}.
In this framework, the
potential outcomes are fixed, and randomness
comes exclusively from the random assignment $\bZ$.

\subsub{Notation and assumptions}
\label{sec:po_notation}

Let $Y_{it}(\bZ)$ denote the scalar potential outcome of unit $i$ at time period $t$
under population assignment $\bZ$. 
We also denote by
$\bY_i(\bZ) = (Y_{i1}(\bZ), \ldots, Y_{iT}(\bZ))$ the vector of outcomes over time under
$\bZ$ for unit $i$, and by $\bY(\bZ) = [\bY_1(\bZ), \ldots, \bY_N(\bZ)]^\top$ the
$N\times T$ matrix whose $i$-th row is equal to $\bY_i(\bZ)$. We make the following two
assumptions.
\begin{assumption}[No  interference]\label{a:no-if}
  The outcome of unit $i$ at time $t$ depends only on it's own sequence of assignments.
  That is, for every unit $i$,
  \begin{equation*}
   \bZ_i = \bZ'_i~
    \Rightarrow~
    Y_{it}(\bZ) = Y_{it}(\bZ'),~\text{for all}~\bZ, \bZ',~\text{and all}~t=1, \ldots, T.
  \end{equation*}
\end{assumption}
\begin{assumption}[Non-anticipating outcome]\label{a:no-anticipating}
  The outcome of unit $i$ at time $t$ depends only on the treatment history up to
  time $t$. 
\end{assumption}

Assumption~\ref{a:no-if} extends to the temporal setting the classical
``no interference" assumption \citep{cox1958planning}, which is standard in causal
inference and econometrics~\citep{imbens2015causal}. Assumption~\ref{a:no-anticipating}
does not allow the potential outcomes at $t$ to depend on future treatments after $t$~\citep{bojinov2019time}, and is standard in  classical econometric studies ~\citep{heckman2005structural, heckman2016dynamic, robins1997causal, toh2008causal}. 
This assumption could fail, for example, in settings where experimental units are made aware of
(and can therefore anticipate) the future sequence of treatments. In this setting, the units' outcomes may reflect not only the effect of past treatments, but also the units' expectations of future treatments, 
complicating the analysis.

\subsub{Estimands}
\label{sec:estimands}
%%%%%%%%%%%
In the potential outcomes framework, the estimands of interest are defined as contrasts
of potential outcomes \citep{rubin1974estimating}.
In our setting, a simple causal estimand is the average treatment effect at time $t$,
namely,
\begin{equation}\label{eq:ATE1}
  \ATE(t) = \frac{1}{N} \sum_{i=1}^N Y_{it}(\one) - Y_{it}(\zero),
\end{equation}
which contrasts the potential outcomes at time $t$ between the always-treated
and always-control population assignments. This estimand entangles the effects
of past treatments with the effect of the current treatment, that is, 
it is a combination of {\em \dyneffs} and {\em instantaneous effects}. Formally, 
we can write:
\begin{flalign}
\ATE(t)
  &= \frac{1}{N}
  \sum_{i=1}^N \{Y_{it}(\one) - Y_{it}(\be_t) + Y_{it}(\be_t) - Y_{it}(\zero)\} \notag
  \\
  &= \frac{1}{N} \sum_{i=1}^N \{Y_{it}(\one) - Y_{it}(\be_t)\}  + \frac{1}{N} \sum_{i=1}^N \{Y_{it}(\be_t) - Y_{it}(\zero)\}.
  \label{eq:composite}
\end{flalign}
This decomposition leads to the following definitions.
\begin{definition}\label{def:estimands}
For a given time period $t$, the \dyneff, $\lambda_t$, and the instantaneous effect, $\delta_t$, are defined as:
\begin{equation}
\lambda_t   = \frac{1}{N} \sum_{i=1}^N \{Y_{it}(\one) - Y_{it}(\be_t)\}, \qquad
\delta_t  = \frac{1}{N} \sum_{i=1}^N \{Y_{it}(\be_t) - Y_{it}(\zero)\}.
\end{equation}
\end{definition}
% PTA - double check here.
% GWB - ok
As defined, the \dyneff, $\lambda_t$, contrasts the potential outcomes
under the always-treated assignment and the pulse assignment at time period $t$. These two
assignments have the same treatment at time $t$, so $\lambda_t$ intuitively
captures the treatment effect that can be attributed only to a cumulative effect from 
treatment history and not to treatment application at time $t$.
With this definition, we aim to capture the \dyneffs\ studied in the literature~\citep{hohnhold2015focusing, allcott2014short}. 
The instantaneous effect, $\delta_t$, contrasts the potential outcomes under the pulse assignment and the
always-control assignment. These two assignments have the same treatment history
up to time $t-1$ but differ in their treatment at time $t$. 
Thus, $\delta_t$ captures the effect of the treatment that can be attributed 
only to its application at time $t$, and not to any carryover effects from past treatments.
For the rest of this paper, our goal  will be to propose minimax optimal designs for estimating
the habituation and instantaneous effects jointly.

\begin{remark}
  The definitions of the \dyn\ and instantaneous effects involve only the potential
  outcomes $Y_{it}(\one)$, $Y_{it}(\zero)$ and $Y_{it}(\be_t)$; these potential outcomes
  derive from the assignments $\one$, $\zero$ and $\{\be_t\}$, which form the backbone
  of (and therefore justify the use of) the pulse design.
\end{remark}

\begin{remark}
  In the context of online advertisement, interest in the \dyneff\ has
  been motivated by the goal of estimating the long-term causal effects of an
  intervention based on a short-term experiment \citep{hohnhold2015focusing}. We
  believe that even if the ultimate objective is to estimate the long-term
  average treatment effect $\ATE(T^\ast)$ for $T^\ast > T$, one should still focus
  on the estimands $\{\lambda_t\}_{t=1}^T$ and $\{\delta_t\}_{t=1}^T$. Indeed, we
  argue that extrapolation of the $\ATE$ should be driven mostly by an extrapolation
  of the \dyneffs\ $\{\lambda_t\}_{t=1}^T$ which, in the context of online
  advertisement, can be supported by behavioral game theory
  \citep{chatterjee2003modeling, toulis2016long}. In contrast, the instantaneous effects
  $\{\delta_t\}_{t=1}^T$ are largely conjectural, and it is difficult to justify
  their extrapolation. In this context, our recommendation is to first
  estimate both the \dyn and instantaneous effects, then to verify that the
  instantaneous effects are stationary, and, finally, to extrapolate the \dyneffs.
  We leave this for future work.
\end{remark}

\newcommand{\hatl}{\widehat\lambda}
\newcommand{\hatd}{\widehat\delta}
\newcommand{\opt}{\mathrm{opt}}
%%%%%%%%%%%
\section{Minimax designs}
\label{sec:minimax}
%%%%%%%%%%%
%PTA - seems redundant
% Here, we present a minimax analysis of pulse designs,  and later consider extensions to wedge designs~(see Remark~\ref{remark:wedges}).

%%%%%%%%%%%

\subsection{Estimators and risk}
\label{sec:estimators}
%%%%%%%%%%%

To state our minimax results, we need to define estimators for our
estimands, $\delta_t, \lambda_t$. To estimate $\lambda_t$ we define
the following plug-in estimator,
\begin{equation}\label{eq:hatl}
\hatl_t= \frac{1}{N_1} \sum_{i=1}^N \iv(\bZ_i = \one) Y_{it}(\one) -
\frac{1}{N_{e_t}} \sum_{i=1}^N \iv(\bZ_i = \be_t) Y_{it}(\be_t),~~t=2, \ldots, N,
\end{equation}
where
$N_1= \sum_{i=1}^N \iv(\bZ_i = \one)$ is the number of always-treated units, and
$N_{e_t} = \sum_{i=1}^N \iv(\bZ_i = \be_t)$ is the number of units assigned to a pulse
at time $t$. Variables $N_1, N_{e_t}$ are random because $\bZ$ is random in the experiment. Similarly, define the plug-in estimator of $\delta_t$ as follows,
\begin{equation}\label{eq:hatd}
\hatd_t = \frac{1}{N_{e_t}} \sum_{i=1}^N \iv(\bZ_i = \be_t) Y_{it}(\be_t) - 
\frac{1}{N_\zero} \sum_i^N \iv(\bZ_i = \zero) Y_{it}(\zero),~~t=2, \ldots, N,
\end{equation}
where
$N_\zero = \sum_{i=1}^N \iv(\bZ_i = \zero)$ is the number of always-control units. 
These plug-in estimators are simple and have well-studied sampling properties under
randomization~\citep{imbens2015causal}. In addition, their symmetry makes them amenable
to minimax analysis.

The risk of these estimators is a function of the design and  potential
outcomes. Let 
$\bY(\bz) = [\bY_1(\bz)~\ldots~\bY_N(\bz)]^\top$ denote the $N\times T$ matrix of potential outcomes under 
population assignment 
$\bz$,
whose $i$-{th} row is equal to $\bY_i(\bz)$. The full {\em schedule 
of
potential outcomes}, denoted by $\mY = [\bY(\zero), \bY(1), \bY(\be_2), \ldots, \bY(\be_T)]$,
therefore contains all the information needed for causal inference, since the causal estimands,
$\lambda_t$ and $\delta_t$, are deterministic functions of $\mY$. For a random assignment
$\bZ \in \PulsesSupport$ and a schedule of potential outcomes $\mY$, we consider the
squared loss function, 
\begin{equation}\label{eq:loss}
L(\bZ, \mY) = \sum_{t=2}^T (\hatl_t - \lambda_t)^2 + \sum_{t=2}^T(\hatd_t - \delta_t)^2,
\end{equation}
which is a function only of $\bZ$ and $\mY$, since $\hatl_t$ and $\hatd_t$ are
deterministic functions of $\bZ$ and $\mY$, while $\lambda_t$ and $\delta_t$ are 
functions of $\mY$ only. The risk of a pulse design, $\eta \in \PulsesDesign$, is then
defined as
\begin{equation}\label{eq:risk}
  r(\eta; \mY) = E_\eta\{L(\bZ, \mY)\},
\end{equation}
where the expectation is with respect to the randomization distribution in the design, $\eta(\bZ)$. 

As mentioned in Section~\ref{sec:po}, the schedule of potential outcomes,
$\mY$, is considered fixed, all randomness coming from $\bZ$. 
A minimax pulse design, $\eta^{\opt}$, minimizes the maximum risk over the support of
potential outcome schedules, denoted by $\mbY$, over  the entire class of pulse designs $\PulsesDesign$, that is, we define
\begin{equation}\label{eq:minimax}
\eta^{\opt} = \argmin_{\eta \in \PulsesDesign} \max_{\mY \in \mbY} \{r(\eta, \mY)\}.
\end{equation}
The task of obtaining a minimax design can therefore be seen as a game
between the statistician who chooses a pulse design $\eta \in \PulsesDesign$, and nature who chooses the worst-case schedule of potential outcomes from $\mbY$. 
To make progress, we impose an invariance property on $\mbY$, which 
we describe in the following section.

\subsection{Permutation invariance of potential outcomes}
\label{sec:perm}
In the cross-sectional setting, \cite{wu1981minimax}
considered minimax designs over permutation-invariant sets of model parameters to
reflect the experimenter's ignorance at the design stage. We adapt this idea to our
setting by introducing a similar notion of permutation invariance for $\mbY$.~
Specifically, for some set $\Yset \subset \mathbb{R}^N$, let
$\mathbb{Y}(\Yset) = \{ [\byy_1~\ldots~\byy_T]: \byy_t \in \Yset,~t=1,\ldots,T\}$
be the set of all $N\times T$ matrices whose columns are in $\Yset$, 
and let $\mbY(\Yset)$
be the set of all potential outcomes schedules whose matrices are all
elements of $\mathbb{Y}(\Yset)$. Thus, if
$\mY = [\bY(\zero), \bY(1), \bY(\be_2), \ldots, \bY(\be_T)] \in \mbY(\Yset)$ is one
such schedule of potential outcomes, then $\bY(\bz) \in \mathbb{Y}(\Yset)$ for all $\bz \in \Pulses$. 
%
% For our analysis, we impose an invariance structure on the columns of individual potential outcomes matrices, $\bY(\bz)$, for $\bz \in \Pulses$. This, in turn, implies a permutation-invariant structure on the set of potential outcomes schedules, as we discuss next.
% This essentially
% implies a permutation-invariant structure on the set of potential outcomes schedules,
% as we discuss next. Our analysis therefore  extends the work of~\citet{wu1981minimax}
% on minimax designs in cross-sectional settings to the  potential outcomes framework in
% temporal settings.
%
\begin{definition}[Permutation-invariant schedule]\label{def:pi}
  %Consider some $\Yset \subset \mathbb{R}^N$ and
  Let $\Sset_N$ be the symmetric group on $N$
  elements. A set $\mbY$ of potential outcomes
  schedules is called permutation-invariant if $\mbY = \mbY(\Yset)$ for
  some $\Yset \subset \mathbb{R}^N$, such that $\Sset_N \cdot \Yset = \Yset$, 
  where $\Sset_N \cdot \Yset$ is the set in which every element of $\Yset$ has
  been permuted with every element of $\Sset_N$.
\end{definition}
The permutation invariance property in Definition~\ref{def:pi} captures a form of symmetry on 
the units' outcomes. For example, it implies that for any
$\bz \in \Pulses$ and any $t= 1, \ldots, T$, if it is possible that
$\byy_t(\bz) = \byy$, where $\byy_t(\bz) = (Y_{1t}(\bz), \ldots, Y_{Nt}(\bz))$, it should also be possible that
$\byy_t(\bz) = \pi \cdot \byy$, for any permutation $\pi \in \Sset_N$. This property
is not a probabilistic statement about the likelihood of $\byy$ or
$\pi \cdot \byy$, but a statement about the support of the potential outcomes, which is
a weaker assumption; see also \citep{wu1981minimax} for an interpretation of permutation
invariance in terms of robustness.

\subsection{Minimax optimal design}
\label{sec:minimax_design}
%%%%%%%%%%%

We can now state our first minimax theorem for pulse designs. The resulting design is
the solution to an integer optimization problem, which is generally hard to solve.
We therefore discuss a continuous relaxation in Proposition~\ref{prop:optimization}.

\begin{theorem}[Minimax pulse design]\label{th:minimax-design}
Let $\mbY$ be a bounded, permutation-invariant set of potential outcome schedules.~The minimax optimal pulse design, $\eta^\opt$, 
in Equation~\eqref{eq:minimax} is the completely randomized design that assigns $N_1^{\opt}$, $N_0^{\opt}$ and 
$\{N_{e_t}^{\opt}\}_{t=2}^T$ units to the assignments $\one$, $\zero$, and $\{\be_t\}_{t=2}^T$, respectively, where:
\begin{equation}\label{eq:IO}
	(N_0^\opt, N_1^\opt, \{N_{e_t}^{\opt}\}_{t=2}^T) = 
	\argmin_{\substack{N'_0, N'_1, \{N'_{e_t}\}_{t=2}^T \in \mathds{N}^+\\
				       N'_0 + N'_1 + \sum_{t=2}^T N'_{e_t} = N}}
	\bigg( \frac{T-1}{N'_1} + \frac{T-1}{N'_0} + 2 \sum_{t=2}^T \frac{1}{N'_{e_t}} \bigg).
\end{equation}
\end{theorem}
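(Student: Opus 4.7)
The plan is to reduce the minimax search over all pulse designs to a search over completely randomized designs (CRDs) with deterministic arm counts, compute the CRD risk in closed form, and identify a single worst-case potential-outcome schedule that is independent of the CRD, so that the outer minimization collapses into the displayed integer program.

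First, I would symmetrize. Given any $\eta \in \PulsesDesign$, define $\tilde\eta$ by composing $\eta$ with a uniformly random permutation $\pi \in \Sset_N$ of unit labels. Both the plug-in estimators $\hatl_t, \hatd_t$ and the estimands $\lambda_t, \delta_t$ are symmetric functions of the units, so the loss satisfies $L(\pi\bZ, \pi\mY) = L(\bZ, \mY)$, which yields $r(\tilde\eta; \mY) = \tfrac{1}{N!}\sum_\pi r(\eta; \pi^{-1}\mY)$. Combined with permutation invariance of $\mbY$, this gives $\max_{\mY \in \mbY} r(\tilde\eta; \mY) \leq \max_{\mY \in \mbY} r(\eta; \mY)$, so symmetrization never inflates the worst-case risk. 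By the usual argument that, conditional on the arm-count vector $\mathbf{n} = (N_0, N_1, N_{e_2}, \ldots, N_{e_T})$, a symmetric pulse design is uniform over assignment matrices with those counts, each such design is a mixture $\sum_\mathbf{n} p(\mathbf{n}) \eta_\mathbf{n}^{\mathrm{CR}}$ of CRDs.

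Second, for fixed $\mathbf{n}$ with positive entries, I would apply the multi-arm Neyman variance identity. The plug-in estimators are unbiased under a CRD, and only the $t$-th columns of $\bY(\one), \bY(\be_t), \bY(\zero)$ enter the $t$-th loss contribution, so
\begin{align*}
\mathrm{Var}(\hatl_t) &= S^2_\one(t)/N_1 + S^2_{\be_t}(t)/N_{e_t} - S^2_{\one,\be_t}(t)/N,\\
\mathrm{Var}(\hatd_t) &= S^2_{\be_t}(t)/N_{e_t} + S^2_\zero(t)/N_0 - S^2_{\be_t,\zero}(t)/N,
\end{align*}
where $S^2_\bz(t)$ is the finite-population variance of column $t$ of $\bY(\bz)$ and $S^2_{\bz,\bz'}(t)$ that of the componentwise difference. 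Since $\mbY$ is bounded, $S^2_\ast := \sup_{\byy \in \Yset} S^2(\byy)$ is finite; choosing a (near-)maximizer $\byy_\ast \in \Yset$ and setting $\byy_t(\bz) = \byy_\ast$ for every $\bz \in \Pulses$ and every $t$ gives a schedule $\mY^\ast \in \mbY(\Yset)$ that simultaneously saturates every positive variance at $S^2_\ast$ and zeroes every treatment-effect variance. Since $S^2_\bz(t) \leq S^2_\ast$ and $S^2_{\bz,\bz'}(t) \geq 0$ for every schedule in $\mbY$, the schedule $\mY^\ast$ attains $R(\mathbf{n}) := \max_{\mY} r(\eta_\mathbf{n}^{\mathrm{CR}}; \mY) = S^2_\ast \big[(T-1)/N_1 + (T-1)/N_0 + 2\sum_{t=2}^T 1/N_{e_t}\big]$.

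The decisive observation is that $\mY^\ast$ is worst-case for every $\mathbf{n}$. Therefore, for any symmetric mixture $\eta = \sum_\mathbf{n} p(\mathbf{n}) \eta_\mathbf{n}^{\mathrm{CR}}$, $\max_{\mY} r(\eta; \mY) = r(\eta; \mY^\ast) = \sum_\mathbf{n} p(\mathbf{n}) R(\mathbf{n}) \geq \min_\mathbf{n} R(\mathbf{n})$, with equality iff $p$ concentrates on the minimizers of $R$. Optimizing $R(\mathbf{n})$ over positive-integer count vectors summing to $N$ is precisely~\eqref{eq:IO}; the positivity constraint keeps the plug-in estimators well-defined. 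I expect the main technical obstacle to be the careful derivation of the multi-arm Neyman variance identity in the pulse-indexed setting, and the demonstration that the upper bound on $\mathrm{Var}(\hatl_t)+\mathrm{Var}(\hatd_t)$ is attained by a single schedule in $\mbY(\Yset)$ simultaneously for all $t$; once these are in place, the symmetrization and collapse to~\eqref{eq:IO} are immediate.
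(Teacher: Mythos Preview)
Your proposal is correct and follows essentially the same route as the paper: symmetrize via a uniform permutation to reduce to mixtures of completely randomized designs (the paper's Lemmas~\ref{lemma:loss}--\ref{lemma:crd}), apply the multi-arm Neyman variance identity to each CRD, and construct a single schedule $\mY^\ast$ with all columns equal to a variance-maximizing vector in $\Yset$ that simultaneously saturates the positive terms and kills the treatment-effect variances (the paper's Lemma~\ref{lemma:max-crd}). The only cosmetic difference is that you phrase the last step as ``$\mY^\ast$ is worst-case for every $\mathbf{n}$, hence for any mixture,'' whereas the paper computes $\max_{\mY} r(\tilde\eta;\mY)$ directly as the $\eta$-weighted sum and then minimizes; the logic is identical.
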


Two aspects of this result deserve special mention. First, the minimax design in
Theorem~\ref{th:minimax-design} is a completely randomized design. 
This result agrees with
the results of~\citet{wu1981minimax} and \citet{li1983minimaxity} who proved that
complete randomization is minimax optimal in the static, cross-sectional setting. Second, in contrast to the aforementioned works, the
minimax design in Theorem~\ref{th:minimax-design} is not balanced as it does not assign
the same number of units to each of the $T+1$ treatment arms in $\Pulses$.
This is made clearer by the following result.
\begin{proposition}\label{prop:optimization}
The relaxation of the integer optimization problem of Equation~\eqref{eq:IO},
\begin{equation}\label{eq:integer-optimization}
	(\tilde{N}_0, \tilde{N}_1, \{\tilde{N}_{e_t}\}_{t=2}^T) = 
	\argmin_{\substack{N'_0, N'_1, \{N_{e_t}^{\opt}\}_{t=2}^T \in \mathds{R}^+ \\
				       N'_0 + N'_1 + \sum_{t=2}^T N_{e_t} = N}}
	\bigg( \frac{T-1}{N'_1} + \frac{T-1}{N'_0} + 2 \sum_{t=2}^T \frac{1}{N'_{e_t}} \bigg),
\end{equation}
has the following solutions:
\begin{flalign*}
	\tilde{N}_\one = \tilde{N}_\zero  = \frac{N}{2 + \sqrt{2 (T-1)}};\quad \tilde{N}_{e_t} &= \sqrt{\frac{2}{T-1}} \frac{N}{2 + \sqrt{2(T-1)}},~t=2, \ldots, T.
\end{flalign*}
\end{proposition}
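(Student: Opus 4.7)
The plan is to treat this as a constrained convex optimization problem and solve it with Lagrange multipliers. The objective $f(N_0', N_1', \{N_{e_t}'\}) = (T-1)/N_1' + (T-1)/N_0' + 2\sum_{t=2}^T 1/N_{e_t}'$ is strictly convex on the positive orthant (each $1/x$ term is convex), and the equality constraint defines an affine hyperplane, so any KKT stationary point in the interior is the unique global minimizer. Consequently, the problem reduces to writing down and solving the first-order conditions.

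First, I would form the Lagrangian
\[
\mathcal{L} = \frac{T-1}{N_1'} + \frac{T-1}{N_0'} + 2\sum_{t=2}^T \frac{1}{N_{e_t}'} + \mu\!\left(N_0' + N_1' + \sum_{t=2}^T N_{e_t}' - N\right),
\]
and set the partial derivatives with respect to $N_0'$, $N_1'$, and each $N_{e_t}'$ to zero. This gives $(N_0')^2 = (N_1')^2 = (T-1)/\mu$ and $(N_{e_t}')^2 = 2/\mu$ for every $t=2,\ldots,T$. Two structural facts fall out immediately: the optimum is symmetric between the always-treated and always-control arms, and it is symmetric across the $T-1$ pulse arms. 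Let $a := N_0' = N_1' = \sqrt{(T-1)/\mu}$ and $b := N_{e_t}' = \sqrt{2/\mu}$. Taking ratios eliminates $\mu$ and yields $b/a = \sqrt{2/(T-1)}$, independent of $N$.

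To pin down the absolute scale, I would substitute into the linear constraint $2a + (T-1)b = N$. Using $b = a\sqrt{2/(T-1)}$, this becomes $a\bigl(2 + (T-1)\sqrt{2/(T-1)}\bigr) = a\bigl(2 + \sqrt{2(T-1)}\bigr) = N$, which gives $a = N/\bigl(2 + \sqrt{2(T-1)}\bigr)$ and then $b = \sqrt{2/(T-1)}\cdot N/\bigl(2 + \sqrt{2(T-1)}\bigr)$, exactly the formulas claimed.

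No step appears substantively hard; the argument is largely bookkeeping once convexity is invoked. The one care point, rather than an obstacle, is verifying that the KKT point actually lies in the interior of the positive orthant so that no non-negativity constraint becomes active---but $a,b>0$ are manifest from the closed forms, so this is automatic. Convexity plus uniqueness of the critical point then upgrades it to the global minimizer, completing the proof.
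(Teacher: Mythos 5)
Your proposal is correct and follows essentially the same route as the paper's own proof: form the Lagrangian, set the partial derivatives to zero to obtain $N_0'=N_1'=\sqrt{(T-1)/\mu}$ and $N_{e_t}'=\sqrt{2/\mu}$, and then use the budget constraint to eliminate the multiplier. Your explicit appeal to strict convexity to certify that the interior stationary point is the unique global minimizer is a small but welcome addition that the paper leaves implicit.
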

The solution of the relaxed problem exhibits a partial asymmetry with three notable features.
First, $\tilde{N}_1 = \tilde{N}_0$ while
$\tilde{N}_{e_t} = \tilde{N}_{e_t'}$ for all $1 \leq t,t' \leq T$, and so there is symmetry 
between always-control and always-treated units, but not across all units.
Second, $\tilde{N}_{\be_t} / \tilde{N}_\one \to 0$ as the time horizon $T$ grows, 
and so the number of units assigned to any pulse assignment is asymptotically negligible
compared to the always-treated (or always-control) assignments. Third,
$\tilde N_\one /  \sum_{t=2}^T \tilde N_{\be_t}  \to  0$, 
and so the pulse assignments taken together dominate the other treatment arms. 
To get intuition, let us consider the problem with $T=30$ and $N = 10000$. In this case, the minimax design assigns
$\tilde{N}_1 = \tilde{N}_0 \approx 1040$, and $\tilde{N}_{e_t} \approx 273$ for
$t=2, \ldots, 30$. In contrast, a balanced randomized design would assign
$322$ units to each treatment arm. 
%
% PTA - here, we had something about 'carryover effects. Check.
% GB - checked –– looks ok to me, and it's good to remove carryover effects.
Such difference between our design and standard balanced designs 
mainly stem from the definition of our loss function in Equation~\eqref{eq:loss},
which takes into account arbitrary-sized effects from past treatment history.

\begin{remark}
  Since the optimal design obtained in Theorem~\ref{th:minimax-design} is completely
  randomized, randomization-based analysis of the experiment is possible
  using standard Neymanian theory~\citep[Chapter 6]{imbens2015causal}. See proof in Appendix for details.
\end{remark}
  
\begin{remark}[Wedge designs]
\label{remark:wedges}
  As mentioned earlier, the results in this section (as well as in Section~\ref{sec:minimax-partial}) are
  stated in terms of pulse designs, but they can be extended to wedge designs since, by  Assumption~\ref{a:no-anticipating},
  \begin{equation*}
    Y_{it}(\bw_{t'}) = Y_{it}(\be_{t'}), \quad\text{for all},~t,t'~\text{such that}~t' \geq t.
  \end{equation*}
  That is, the potential outcomes of a unit under assignments $\bw_{t'}$ and $\be_{t'}$
  are identical for all periods prior to and including $t'$. In particular, the
  estimands, $\lambda_t$ and $\delta_t$, can be written in terms of wedge
  assignments instead of pulse assignments by substituting $\bw_t$ for $\be_t$ in
  Equation~\eqref{eq:composite}, and the rest of the analysis remains unchanged.
\end{remark}

\subsection{Minimax optimal design with augmented controls}
\label{sec:minimax-partial}

The design of Theorem~\ref{th:minimax-design} was obtained using plug-in
estimators for $\delta_t$ and $\lambda_t$. 
Here, we discuss the minimax design problem using a better estimator for $\delta_t$.
The key idea relies on 
Assumption~\ref{a:no-anticipating}, which implies that  $Y_{it}(\be_{t'}) = Y_{it}(\zero)$, for all $t < t'$.
In other words, under Assumption~\ref{a:no-anticipating}, at all times prior to $t'$ a pulse assignment
$\be_{t'}$ is indistinguishable from an always-control assignment in the sense that a unit assigned to $\be_{t'}$
behaves as if it had been assigned to $\zero$, for all $t < t'$. We can therefore use outcomes from 
units assigned to pulses for estimation of unknown control outcomes.

The new estimator that replaces the plug-in estimator, $\hat{\delta}_t$,
is defined as follows:
\begin{equation}\label{eq:gamma}
\hat{\gamma}_t = \frac{1}{N_{e_t}} \sum_i^N \iv(\bZ_i = \be_t)Y_{it}(\be_t) -
\frac{1}{N_t} \sum_i^N \iv(i \in \Cset^{(t)}) Y_{it}(\zero),
\end{equation}
where $N_t = |\Cset^{(t)}|$,  and $\Cset^{(t)} = \{i: \bZ_i = \zero,~\text{or}~\bZ_i = e_{t'},~t'>t\}$. Here, $\Cset^{(t)}$ is the new set of ``augmented control" units at $t$, i.e., 
the set comprised either of always-control units, or units assigned to pulse at a time $t'>t$.
The new loss function is now defined as
\begin{equation}\label{eq:loss2}
	L(\bZ, \mY) = \sum_{t=2}^T (\hatl_t - \lambda_t)^2 + \sum_{t=2}^T(\hat\gamma_t - \delta_t)^2,
\end{equation}
which only differs from Equation~\eqref{eq:risk} in using the new estimator, $\hat\gamma_t$. 
The updated minimax result is stated in the following theorem.
\begin{theorem} \label{th:partial-recycling}
Let $\mbY$ be a bounded, permutation-invariant set of potential outcome schedules.~The minimax optimal pulse design, $\eta^\opt$,
in Equation~\eqref{eq:minimax} using the new instantaneous effect estimator $\hat\gamma_t$
	in Equation~\eqref{eq:gamma}, is the completely randomized design that assigns $N_1^{\opt}$, $N_0^{\opt}$ and 
	$\{N_{e_t}^{\opt}\}_{t=2}^T$ units to the assignments $\one$, $\zero$, and $\{\be_t\}_{t=2}^T$, respectively, where:
	\begin{equation}\label{eq:integer-optimization-partial-recycle}
	(N_0^{\opt}, N_1^{\opt}, \{N_{e_t}^{\opt}\}_{t=2}^T) = 
	\argmin_{\substack{N'_0, N'_1, \{N'_{e_t}\}_{t=2}^T \in \mathds{N} \\
			N'_0 + N'_1 + \sum_{t=2}^T N'_{e_t} = N}}
	\bigg( \frac{T-1}{N'_1} + 2 \sum_{t=2}^T \frac{1}{N'_{e_t}} + \sum_{t=2}^T \frac{1}{N'_t} \bigg),
	\end{equation}
	and for each $t$,
        $N'_t = |\Cset^{(t)}| = N_0 + \sum_{t'=2}^T \iv(t' > t) N_{e_{t'}}$.
\end{theorem}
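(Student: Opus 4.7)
The plan is to mirror the proof of Theorem~\ref{th:minimax-design}, with the key modification being the variance analysis of the augmented-control estimator $\hat\gamma_t$ in place of the plug-in estimator $\hat\delta_t$. First, I would invoke the same symmetrization reduction: since $\mbY$ is permutation-invariant in the sense of Definition~\ref{def:pi}, and the loss $L(\bZ, \mY)$ together with the estimators $(\hatl_t, \hat\gamma_t)$ are equivariant under joint relabelling of units, a Wu--Li style averaging argument \citep{wu1981minimax, li1983minimaxity} shows that any pulse design can be replaced by its label-symmetrized version without increasing the worst-case risk. The search for $\eta^\opt$ therefore reduces to completely randomized designs parametrized by fixed allocation counts $(N_0, N_1, \{N_{e_t}\}_{t=2}^T)$ summing to $N$, and it remains only to compute the expected loss as a closed-form function of these counts.

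Second, under such a CRD I would compute the expected loss as a sum of Neymanian finite-population variances \citep[Ch.~6]{imbens2015causal}. The contribution $\sum_{t=2}^T E[(\hatl_t - \lambda_t)^2]$ is unchanged from Theorem~\ref{th:minimax-design} and, at the worst case over bounded permutation-invariant $\mbY$, is proportional to $(T-1)/N_1 + \sum_{t=2}^T 1/N_{e_t}$. The genuinely new computation is $\sum_{t=2}^T E[(\hat\gamma_t - \delta_t)^2]$. Here the key observation is that by Assumption~\ref{a:no-anticipating}, every $i \in \Cset^{(t)}$ satisfies $Y_{it}(\bZ_i) = Y_{it}(\zero)$, whether $\bZ_i = \zero$ or $\bZ_i = \be_{t'}$ with $t' > t$. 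Under a CRD with fixed counts, the second term of $\hat\gamma_t$ is therefore an unbiased average of the fixed population quantities $\{Y_{it}(\zero)\}$ taken over a uniformly random subset of \emph{deterministic} size $N'_t = N_0 + \sum_{t' > t} N_{e_{t'}}$, contributing a variance of order $1/N'_t$ at the worst case. Combined with the $1/N_{e_t}$ contribution from the pulse average (and a benign finite-population covariance correction that vanishes at the worst case), summing over $t$ yields the new term $\sum_{t=2}^T 1/N'_t$ that replaces the $(T-1)/N'_0$ of Theorem~\ref{th:minimax-design}.

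Putting the pieces together gives exactly the design-dependent objective in Equation~\eqref{eq:integer-optimization-partial-recycle}; minimizing over feasible integer allocations then yields the claimed $\eta^\opt$. The main obstacle, as in Theorem~\ref{th:minimax-design}, is formalizing the worst-case reduction so that the finite-population variance constants from the Neyman formula collapse to a single universal factor multiplying the design-dependent expression, since this is what enables the pointwise minimax solution to coincide with the minimizer of the simpler criterion. Beyond that, the argument is essentially Neymanian bookkeeping: the one genuinely new calculation is the variance of the augmented-control average under Assumption~\ref{a:no-anticipating}, which reduces to the simple-random-sample-without-replacement formula applied to the homogeneous outcomes $\{Y_{it}(\zero)\}_{i \in \Cset^{(t)}}$.
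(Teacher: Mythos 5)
Your proposal is correct and follows essentially the same route as the paper: symmetrize over unit relabellings to reduce to completely randomized designs (after checking permutation-equivariance of $\hat\gamma_t$, which hinges on $\Cset^{(t)}(\pi\cdot\bZ)=\pi\cdot\Cset^{(t)}(\bZ)$ and $|\Cset^{(t)}|$ being label-invariant), then observe that under a CRD the augmented control group is a simple random sample of deterministic size $N'_t$ of units whose relevant outcomes all equal $Y_{it}(\zero)$ by Assumption~\ref{a:no-anticipating}, so the Neyman variance contributes $V_0^{(t)}/N'_t$ in place of $V_0^{(t)}/N'_0$, and finally collapse the worst case to a single constant $V^\ast$ via a schedule with identical columns that simultaneously maximizes all variance terms and zeroes the negative covariance corrections. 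This matches the paper's proof (given via Lemmas~\ref{lemma:loss-alt} and \ref{lemma:max-crd-alt} and the argument of Theorem~\ref{th:minimax-design}), including the step you flag as the main obstacle.
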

The main practical difference between Theorem~\ref{th:partial-recycling} and Theorem~\ref{th:minimax-design} is that the optimization problem of
Equation~\eqref{eq:integer-optimization-partial-recycle} is replacing that of
Equation~\eqref{eq:integer-optimization}. The following proposition derives the solutions
to a continuous relaxation of the problem.
\begin{proposition}\label{prop:solution-relaxation}
The following integer relaxation of the optimization problem in
Equation~\eqref{eq:integer-optimization-partial-recycle},
\begin{equation*}
	(N_0^{\opt}, N_1^{\opt}, \{N_{e_t}^{\opt}\}_{t=2}^T) = 
	\argmin_{\substack{N'_0, N'_1, \{N'_{e_t}\}_{t=2}^T \in \mathds{R} \\
			N'_0 + N'_1 + \sum_{t=2}^T N'_{e_t} = N}}
	\bigg( \frac{T-1}{N'_1} + 2 \sum_{t=2}^T \frac{1}{N'_{e_t}} + \sum_{t=2}^T \frac{1}{N'_{e_t}} \bigg),
\end{equation*}
has analytical solutions:
\begin{flalign*}
	N^{\opt}_{e_t} &= N^{\opt}_0 \sqrt{2} c_t, \quad t=2,\ldots, T,\\
	N^{\opt}_1 &= N - N^{\opt}_0 \bigg[ 1 + \sqrt{2} \sum_{t=2}^T c_t \bigg],\\
	N_0^{\opt} &= N \bigg[ 1 + (\sqrt{T-1} + \sqrt{2}) c_2 + \sqrt{2}\sum_{t=3}^T c_t  \bigg]^{-1},
\end{flalign*}
where $\{c_t\}_{t=2}^T$ are defined recursively by $c_T = 1$ and $c_t = \bigg[ \frac{1}{c_{t+1}^2} + \frac{1}{(1+ \sqrt{2} \sum_{t' > t} c_{t'})^2} \bigg]^{-1/2}$, for
all $t = 2, \ldots, T-1$.
\end{proposition}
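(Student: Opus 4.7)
The plan is to solve the continuous relaxation directly via Lagrange multipliers. Reading the third term as $\sum_{t=2}^T 1/N'_t$ to be consistent with Theorem~\ref{th:partial-recycling}, the objective is convex and coercive on the positive orthant under the budget constraint $N'_0 + N'_1 + \sum_{t=2}^T N'_{e_t} = N$, so a unique interior minimizer exists and is characterized by the KKT conditions. The delicate point is that $N'_t = N'_0 + \sum_{t' > t} N'_{e_{t'}}$ couples $N'_0$ with every pulse count, destroying separability, so derivatives of the last sum must be taken with care.

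Let $\mu$ be the Lagrange multiplier for the budget. Differentiating in $N'_1$ gives $(N'_1)^2 = (T-1)/\mu$. Since $\partial N'_t / \partial N'_0 = 1$ for each $t$, the condition in $N'_0$ reads $\sum_{t=2}^T (N'_t)^{-2} = \mu$. For $N'_{e_t}$, noting that $\partial N'_s/\partial N'_{e_t} = \mathds{1}(s < t)$, the condition is $2 (N'_{e_t})^{-2} + \sum_{s=2}^{t-1} (N'_s)^{-2} = \mu$. Subtracting these two equations yields the clean identity
\begin{equation*}
2 (N'_{e_t})^{-2} \;=\; \sum_{s=t}^T (N'_s)^{-2}, \qquad t = 2, \ldots, T.
\end{equation*}

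Next I rescale: define $a_s = N'_s / N'_0$ and $c_t = N'_{e_t}/(\sqrt{2}\, N'_0)$. The identity above becomes $c_t^{-2} = \sum_{s=t}^T a_s^{-2}$, while the definition of $N'_t$ gives $a_T = 1$ (hence $c_T = 1$) and, for $s < T$, $a_s = 1 + \sqrt{2} \sum_{t' > s} c_{t'}$. Telescoping $c_t^{-2} = a_t^{-2} + c_{t+1}^{-2}$ then delivers exactly the recursion stated in the proposition. In addition, $\sum_{s=2}^T a_s^{-2} = c_2^{-2}$ gives $\mu = (N'_0 c_2)^{-2}$, which combined with $(N'_1)^2 = (T-1)/\mu$ yields $N'_1 = \sqrt{T-1}\, N'_0 c_2$.

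I close using the budget. Substituting $N'_{e_t} = \sqrt{2}\, N'_0 c_t$ and $N'_1 = \sqrt{T-1}\, N'_0 c_2$ into the constraint gives $N'_0 \bigl[1 + \sqrt{T-1}\, c_2 + \sqrt{2} \sum_{t=2}^T c_t\bigr] = N$, and pulling out the $c_2$ contribution rewrites the bracket as $1 + (\sqrt{T-1} + \sqrt{2}) c_2 + \sqrt{2} \sum_{t=3}^T c_t$, matching the stated formula for $N^{\opt}_0$. The formula $N^{\opt}_1 = N - N^{\opt}_0 [1 + \sqrt{2} \sum_{t=2}^T c_t]$ is then immediate from rearranging the budget. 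The main obstacle is the index bookkeeping in the partial derivative with respect to $N'_{e_t}$: because $N'_{e_t}$ enters $N'_s$ for every $s < t$, one must correctly identify the range $\{2, \ldots, t-1\}$ in the resulting sum, and it is precisely the telescoping between this and the $N'_0$ stationarity condition that collapses the coupled KKT system into the one-variable backward recursion for $c_t$.
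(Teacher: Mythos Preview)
Your proposal is correct and follows essentially the same approach as the paper: both set first-order/KKT conditions and telescope them into the backward recursion for $c_t$, then recover $N_0^{\opt}$ and $N_1^{\opt}$ from the budget. The only cosmetic difference is that the paper eliminates $N_1$ upfront and subtracts consecutive $N_{e_t}$ stationarity conditions, whereas you keep a Lagrange multiplier and subtract the $N_0$ condition from the $N_{e_t}$ condition directly---your route is slightly cleaner but equivalent.
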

The solution described by Proposition~\ref{prop:solution-relaxation} offers a sharp
contrast to the solution described by Proposition~\ref{prop:optimization}. 
Specifically, the solution of Proposition~\ref{prop:solution-relaxation}
does not exhibit the same form of partial symmetry as in Proposition~\ref{prop:optimization} 
because the 
new estimator, $\hat\gamma_t$, uses outcomes from pulse treatment as information 
about control potential outcomes. This reflects
the fundamental asymmetry of the loss function in Equation~\eqref{eq:loss2} in how it uses always-treated 
and always-control units. The effect will be illustrated more clearly in the simulations of Section~\ref{sec:simulations}, 
which will give more insight into this minimax design by comparing
it to the minimax design of Section~\ref{sec:minimax_design}, and to the standard, balanced
completely randomized design. 

%%%%%%%%%%%%%%%%%%%%
\section{Extensions}
\label{sec:extensions}
%%%%%%%%%%%
% Here, we consider extensions of the main theory on minimax designs developed in the
% previous sections. Section~\ref{sec:weighted-loss} considers loss functions that
% assign different importance weights to carryover and direct effects, while
% in Section~\ref{sec:recycling} we consider minimax designs in settings where the
% scope of carryover effects is restricted.

\subsection{Weighted loss functions}
\label{sec:weighted-loss}
%%%%%%%%%%%

The loss functions considered in Equation~\eqref{eq:loss} and
Equation~\eqref{eq:loss2} put the same weight on the
instantaneous effects and the \dyneffs. This implicitly
assumes that both types of effects are of equal interest, which may not be true
in practice. 
More flexible loss functions could assign different weights to instantaneous and \dyneffs. In this section, we focus on extending the results of
Section~\ref{sec:minimax-partial}; analogous results for
Section~\ref{sec:minimax_design} can be derived.

Specifically, for some $\rho \in [0,1]$ consider the weighted loss function,
\begin{equation}\label{eq:wloss2}
  L(\bZ, \mY) = \rho \sum_{t=2}^T (\hatl_t - \lambda_t)^2 + (1-\rho)\sum_{t=2}^T(\hat\gamma_t - \delta_t)^2, 
\end{equation}
which generalizes Equation~\eqref{eq:loss2}. Thus, parameter $\rho$ controls the relative importance in estimating $\lambda_t$ or $\delta_t$.  The original loss function in Equation~\eqref{eq:loss} is a special case (up to a multiplicative constant)
with $\rho=1/2$. The following theorem extends Theorem~\ref{th:partial-recycling} 
to this new loss function, and derives the minimax optimal design as a function of $\rho$.
\begin{theorem}\label{th:minimax-weighted}
  Under the weighted loss function of Equation~\eqref{eq:wloss2}, the minimax
  optimal design, $\eta^{\opt}$, is still completely randomized, but with
  $N_1^{\opt}$, $N_0^{\opt}$ and $\{N_{e_t}^{\opt}\}_{t=2}^T$ solving:
  \begin{equation}\label{eq:WIOPR}
    (N_0^{\opt}, N_1^{\opt}, \{N_{e_t}^{\opt}\}_{t=2}^T) \quad = 
    \argmin_{\substack{N'_0, N'_1, \{N'_{e_t}\}_{t=2}^T \in \mathds{N} \\
        N'_0 + N'_1 + \sum_{t=2}^T N'_{e_t} = N}}
    \bigg( \rho \frac{T-1}{N'_1} +  \sum_{t=2}^T \frac{1}{N'_{e_t}} + (1-\rho)\sum_{t=2}^T \frac{1}{N'_t} \bigg),
  \end{equation}
  where for each $t$, $N'_t = |\Cset^{(t)}| = N_0 + \sum_{t'=2}^T \iv(t' > t) N_{e_{t'}}$.
\end{theorem}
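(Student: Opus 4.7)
The plan is to follow the proof of Theorem~\ref{th:partial-recycling} essentially verbatim, carrying the weight $\rho$ through each step. The weighted loss in Equation~\eqref{eq:wloss2} is a nonnegative linear combination, with coefficients $\rho$ and $1-\rho$, of the two components that already appeared in Equation~\eqref{eq:loss2}, so its expectation under any pulse design decomposes additively into the same per-time risk terms scaled by $\rho$ or $1-\rho$. The argument proceeds in four steps: (i) reduce from arbitrary $\eta \in \PulsesDesign$ to completely randomized designs by symmetrization; (ii) compute the expected weighted risk as a function of the allocation counts $(N'_0, N'_1, \{N'_{e_t}\})$ and the schedule $\mY$, using Neyman-type finite-population variance formulas for $\hatl_t$ and $\hat\gamma_t$; (iii) take the supremum over $\mY \in \mbY$, which, by permutation invariance and boundedness of $\mbY$, is proportional to a constant determined by the support; and (iv) minimize the resulting expression subject to $N'_0 + N'_1 + \sum_t N'_{e_t} = N$.

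For step (i), given $\eta \in \PulsesDesign$, I would define its symmetrization $\bar\eta$ by drawing $\bZ \sim \eta$ and then permuting its rows by a uniform $\pi \in \Sset_N$. Because $\mbY$ is permutation-invariant, $\max_{\mY \in \mbY} r(\bar\eta, \mY) \le \max_{\mY \in \mbY} r(\eta, \mY)$, so the outer infimum is attained on permutation-symmetric designs. A permutation-symmetric pulse design is a mixture, over allocation vectors $(N_0, N_1, \{N_{e_t}\})$, of completely randomized designs with those counts, and since the maximum-risk objective is subadditive under such mixtures, one may replace the mixture by its best component. This reduction is identical to the one used in the proofs of Theorems~\ref{th:minimax-design} and \ref{th:partial-recycling}, and it is insensitive to the weight $\rho$ because each per-time loss component is itself permutation-invariant.

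Steps (ii) and (iii) are routine reapplications of the variance computations from the proof of Theorem~\ref{th:partial-recycling}: under complete randomization with counts $(N'_0, N'_1, \{N'_{e_t}\})$, the variance of $\hatl_t$ contributes a term of order $1/N'_1 + 1/N'_{e_t}$, while that of $\hat\gamma_t$ contributes $1/N'_{e_t} + 1/N'_t$, with proportionality constants bounded uniformly over $\mbY$ via permutation invariance and boundedness. Weighting by $\rho$ and $1-\rho$, summing over $t = 2, \ldots, T$, and collecting the $1/N'_{e_t}$ coefficients (which add to $\rho + (1-\rho) = 1$) yields the objective in Equation~\eqref{eq:WIOPR} up to a multiplicative constant that does not affect the argmin. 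Step (iv) is then the stated integer program. The hardest conceptual step is the symmetrization in (i), but because it has already been carried out in Theorem~\ref{th:partial-recycling} and depends only on the permutation invariance of each per-time loss term, not on how the terms are combined, it transfers unchanged; what remains is the algebraic bookkeeping to verify that the $\rho$ and $1-\rho$ weights flow correctly to the $(T-1)/N'_1$ and $1/N'_t$ terms while cancelling on $1/N'_{e_t}$.
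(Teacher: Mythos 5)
Your proposal is correct and follows essentially the same route as the paper: symmetrize to reduce to mixtures of completely randomized designs, apply the Neyman variance formulas for $\hatl_t$ and $\hat\gamma_t$ with weights $\rho$ and $1-\rho$, observe that the $1/N'_{e_t}$ coefficients sum to $\rho+(1-\rho)=1$, maximize over $\mY$ via permutation invariance to extract the constant $V^\ast$, and then minimize the resulting allocation objective. The only point worth tightening is that the paper's Lemma~\ref{lemma:max-crd} establishes \emph{equality} (a single worst-case schedule $\mY^{\opt}$ maximizes every CRD component simultaneously), which is slightly stronger than the subadditivity you invoke and is what lets the mixture be replaced by its best pure component without losing the lower bound.
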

In words, the minimax optimal design under a weighted loss function is still
completely randomized: what changes is the number of units assigned to each arm, depending on the parameter $\rho$.
The following relaxation of the optimization problem in Equation~\eqref{eq:WIOPR} provides more intuition.
\newpage

\begin{proposition}\label{prop:relaxation-weighted}
  The continuous relaxation of the optimization problem of Equation~\eqref{eq:WIOPR}
  has solutions, for $\rho < 1$:
  \begin{flalign*}
    N^{\opt}_{e_t} &= N^{\opt}_0 \ell c_t, \quad t=2,\ldots, T,\\
    N^{\opt}_1 &= N - N^{\opt}_0 \bigg[ 1 + \ell \sum_{t=2}^T c_t \bigg], \\
    N_0^{\opt} &= N \bigg[ 1 + \ell (1 + \sqrt{\rho(T-1)}) c_2,
    + \ell\sum_{t=3}^T c_t  \bigg]^{-1},
  \end{flalign*}
  where $\ell = (1-\rho)^{-1/2}$, $c_T = 1$ and
  $c_t = \bigg[ \frac{1}{c^2_{t+1}}
  + \frac{1}{(1 + \ell \sum_{t'>t}c_{t'})^2}\bigg]^{-1/2}$, for all $t=2, \ldots, T-1$. The
  case when $\rho = 1$ is obtained by symmetry with $\rho = 0$; see Appendix.
\end{proposition}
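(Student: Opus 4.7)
The objective in Equation~\eqref{eq:WIOPR} is a sum of terms of the form $1/x$ with $x$ a positive linear combination of decision variables, hence strictly convex on the positive orthant; together with the single linear equality constraint this gives a convex program whose interior KKT conditions are necessary and sufficient. My plan is to write out these first-order conditions, reorganize them into a clean recursion, introduce the auxiliary sequence $\{c_t\}$ to absorb the recursion, and finally read off $N_0^{\opt}$ from the budget constraint.

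First I form the Lagrangian with multiplier $\mu$ for the budget $N'_0 + N'_1 + \sum_{t=2}^T N'_{e_t} = N$ and differentiate. Note that $N'_0$ appears in every $N'_t = N'_0 + \sum_{t' > t} N'_{e_{t'}}$, and $N'_{e_t}$ appears both in its own reciprocal and in $N'_{t'}$ for every $t' < t$. The stationarity conditions therefore read
\begin{align*}
\rho(T-1)/(N_1^{\opt})^2 &= -\mu, \\
(1-\rho)\sum_{t=2}^T 1/(N_t^{\opt})^2 &= -\mu, \\
1/(N_{e_t}^{\opt})^2 + (1-\rho)\sum_{2 \le t' < t} 1/(N_{t'}^{\opt})^2 &= -\mu, \qquad t=2,\ldots,T.
\end{align*}
Combining the second and third identities gives the key simplification
\[
1/(N_{e_t}^{\opt})^2 \;=\; (1-\rho)\sum_{t' \ge t} 1/(N_{t'}^{\opt})^2,
\]
from which I obtain the telescoping recursion $1/(N_{e_t}^{\opt})^2 - 1/(N_{e_{t+1}}^{\opt})^2 = (1-\rho)/(N_t^{\opt})^2$ for $t < T$, with boundary condition $1/(N_{e_T}^{\opt})^2 = (1-\rho)/(N_T^{\opt})^2$; and from the first identity I obtain the relation $N_1^{\opt} = \sqrt{\rho(T-1)}\,N_{e_2}^{\opt}$ (using $S_2 = 0$ so that $(N_{e_2}^{\opt})^2 = -1/\mu$).

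Next I introduce the normalization $N_{e_t}^{\opt} = N_0^{\opt}\,\ell\,c_t$ with $\ell = (1-\rho)^{-1/2}$. Substituting this into the telescoping recursion, using $N_t^{\opt} = N_0^{\opt}(1 + \ell\sum_{t'>t} c_{t'})$ (immediate from the definition $N_t = N_0 + \sum_{t'>t} N_{e_{t'}}$), and clearing $N_0^{\opt}\ell$ and the factor $(1-\rho)\ell^2 = 1$, I recover exactly the stated recursion $c_t^{-2} = c_{t+1}^{-2} + (1 + \ell\sum_{t' > t} c_{t'})^{-2}$; the boundary condition $N_T^{\opt} = N_0^{\opt}$ yields $c_T = 1$.

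Finally I plug everything into the budget constraint. Since $N_1^{\opt} = \sqrt{\rho(T-1)}\,N_{e_2}^{\opt} = N_0^{\opt}\,\ell\sqrt{\rho(T-1)}\,c_2$, the constraint becomes
\[
N_0^{\opt}\Bigl[\,1 + \ell\bigl(1 + \sqrt{\rho(T-1)}\bigr)c_2 + \ell\sum_{t=3}^{T} c_t\,\Bigr] \;=\; N,
\]
which gives the stated formula for $N_0^{\opt}$, and then $N_1^{\opt}$ follows by $N_1^{\opt} = N - N_0^{\opt}[1 + \ell\sum_{t=2}^T c_t]$. The case $\rho = 1$ (where $\ell$ diverges and the $N_0$-term drops from the loss) is degenerate in this parameterization and is handled separately by symmetry with $\rho = 0$ in the appendix.

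\textbf{Main obstacle.} The computations themselves are routine once the right bookkeeping is chosen; the delicate step is recognizing that the cross-coupling through $N_t^{\opt}$ telescopes into a one-step recursion, and then selecting the rescaling $N_{e_t}^{\opt} = N_0^{\opt}\,\ell\,c_t$ that decouples $N_0^{\opt}$ from the recursion entirely, leaving a self-contained definition of $\{c_t\}$. Keeping track of the index ranges (in particular the empty sum $S_2 = 0$ that makes $N_{e_2}^{\opt}$ the anchor tying $N_1^{\opt}$ to the $c_t$ sequence) is the only place where one can easily slip.
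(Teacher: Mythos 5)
Your proposal is correct and follows essentially the same route as the paper's proof: both work from the first-order conditions of the constrained problem, derive the one-step recursion $1/(N_{e_t}^{\opt})^2 - 1/(N_{e_{t+1}}^{\opt})^2 = (1-\rho)/(N_t^{\opt})^2$ with boundary case $N_T^{\opt} = N_0^{\opt}$, introduce the rescaling $N_{e_t}^{\opt} = N_0^{\opt}\,\ell\,c_t$, and close with the budget constraint. The only (cosmetic) differences are that you keep $N_1'$ as a variable with a Lagrange multiplier where the paper eliminates it via the constraint, and you add the useful observation that the objective is strictly convex on the positive orthant so the stationary point is indeed the global minimizer, a point the paper leaves implicit.
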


It is straightforward to see that when $\rho = 1/2$ we recover the results
of Proposition~\ref{prop:solution-relaxation}. Additional intuition can be obtained by
examining boundary values of $\rho$. Consider, for example, the case when $\rho = 1$, such that
the loss function only involves \dyneffs. Then, $N_0^{\opt} = 0$. This is reasonable: if we are only interested in estimating \dyneffs, then always-control units are not needed. Similarly, when the loss function only involves instantaneous effects ($\rho = 0$), the
minimax optimal design assigns no units to the always-treated arm ($N_1 = 0$).

\subsection{Recycling units when treatment effect attenuates}
\label{sec:recycling}
%%%%%%%%%%%

A fundamental premise of our approach so far is that we allow the effects of a treatment to persist indefinitely. In some cases, however, it may be reasonable to assume that the effect of the treatment 
wears off if a unit is untreated for a certain length of time, after being treated. We state this
assumption formally, and then derive the resulting minimax design.
\begin{assumption}[$k$-order carryovers]\label{a:k-recycling}
	For all $i=1, \ldots, N$ and all $t=1, \ldots, T$, 
	\begin{equation*}
		Y_{it}(\be_{t'}) = Y_{it}(\zero), \,\, \forall \,  0 <t' \leq t - k.
	\end{equation*}
\end{assumption}
Assumption~\ref{a:k-recycling} implies that after $k$ periods following a pulse treatment, $\be_{t'}$, the effects 
of the pulse
assignment  are indistinguishable from those of an always-control treatment.
Thus, a unit assigned to a pulse $\be_{t'}$ behaves as if it had been
assigned to $\zero$, for all time periods after (and including) $t' + k$.
Taken together, Assumption~\ref{a:no-anticipating} and Assumption~\ref{a:k-recycling}
suggest a new estimator of the instantaneous effect that generalizes the ``recycling estimator"  of Section~\ref{sec:minimax-partial}.

In particular, for some time period $t$, and $k\in\{1, \ldots, T\}$ with $t-k>0$, let $\Cset_k^{(t)}(\mZ) = \{i: \bZ_i = \zero \, \text{ or }
\bZ_i = e_{t'}, \,\, \text{ with }\, t' \leq t-k \, \text{ or } t' > t\}$ be the new set of ``augmented controls'',
comprised of units assigned either to always-control, to pulses before $t-k$, or to pulses after $t$. 
We replace the estimator of the instantaneous effect, $\hat{\delta}_t$, by the following estimator:
\begin{equation}\label{eq:recycling}
\hat{\beta}_t = \frac{1}{N_{e_t}} \sum_i^N \iv(\bZ_i = \be_t)Y_{it}(\be_t) -
\frac{1}{N_{t,k}} \sum_i^N \iv(i \in \Cset_k^{(t)}) Y_{it}(\zero),
\end{equation}
where $N_{t,k} = |\Cset_k^{(t)}|$. This new estimator is ``recycling" units that are assigned to pulses 
in order to estimate control outcomes~(term $\Cset^{(t)}_k$ in Equation~\eqref{eq:recycling}).
% %
The loss and risk functions are as in
Section~\ref{sec:minimax}, but with $\hat{\beta}_t$ instead of $\hat{\delta}_t$
(or $\hat{\gamma}_t$). We can now state the minimax result under the new estimator of the instantaneous effect.

\begin{theorem} \label{th:recycling}
Let $\mbY$ be a bounded, permutation-invariant set of potential outcome schedules.~The minimax optimal pulse design, $\eta^\opt$,
in Equation~\eqref{eq:minimax} using the new instaneous effect estimator $\hat\beta_t$
	in Equation~\eqref{eq:recycling}
	is the completely randomized design that assigns $N_1^{\opt}$, $N_0^{\opt}$ and 
	$\{N_{e_t}^{\opt}\}_{t=2}^T$ units to the assigmnents $\one$, $\zero$, and $\{\be_t\}_{t=2}^T$, respectively, where:
	\begin{equation}\label{eq:integer-optimization-recycle}
	(N_0^{\opt}, N_1^{\opt}, \{N_{e_t}^{\opt}\}_{t=2}^T) = 
	\argmin_{\substack{N'_0, N'_1, \{N_{e_t}^{\opt}\}_{t=2}^T \in \mathds{N} \\
			N'_0 + N'_1 + \sum_{t=2}^T N'_{e_t} = N}}
	\bigg( \frac{T-1}{N'_1} + 2 \sum_{t=2}^T \frac{1}{N'_{e_t}} + \sum_{t=2}^T \frac{1}{N'_{t,k}} \bigg)
	\end{equation}
\end{theorem}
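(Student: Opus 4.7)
My plan is to mimic the structure of the proof of Theorem~\ref{th:partial-recycling}, with the key new ingredient being the use of Assumption~\ref{a:k-recycling} to expand the pool of units whose observed outcomes are equal in distribution to control potential outcomes. First, I would set up the risk computation for a completely randomized pulse design that allocates $(N_0, N_1, \{N_{e_t}\}_{t=2}^T)$ units to $\zero, \one, \{\be_t\}$ respectively. For such a design, $\hatl_t$ is the standard difference-in-means between the always-treated and the $\be_t$-group, so its variance can be written in the Neymanian form proportional to $\frac{1}{N_1} + \frac{1}{N_{e_t}}$ up to the population variances of the relevant potential outcomes. For $\hat\beta_t$, Assumption~\ref{a:no-anticipating} combined with Assumption~\ref{a:k-recycling} ensures that for every $i \in \Cset_k^{(t)}$ the observed outcome $Y_{it}(\bZ_i)$ equals $Y_{it}(\zero)$, so $\hat\beta_t$ is an unbiased difference-in-means estimator with the augmented control group of size $N_{t,k}$. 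Its variance therefore has the same Neymanian form, proportional to $\frac{1}{N_{e_t}} + \frac{1}{N_{t,k}}$.

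Summing these variance contributions across $t=2,\ldots,T$ gives the expected squared loss under any completely randomized pulse design as a weighted sum of unit-level variances of potential outcomes, where the weights are exactly the integer-optimization objective in Equation~\eqref{eq:integer-optimization-recycle}, namely $\frac{T-1}{N_1} + 2\sum_{t=2}^T \frac{1}{N_{e_t}} + \sum_{t=2}^T \frac{1}{N_{t,k}}$. The key step is then to handle the worst-case over $\mbY$: because $\mbY$ is bounded and permutation-invariant, the Neymanian variance terms $S^2_{\bz,t}$ of each column of $\mY$ are uniformly bounded and the worst case can be attained simultaneously by a single schedule (e.g., a two-valued schedule with half the units at each extreme of the bounding set), so the maximum risk of the completely randomized design factorizes into (the shared maximal variance) times the allocation-dependent objective above. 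This reduces the inner maximization to a constant, leaving the integer allocation problem stated in the theorem.

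The remaining, and conceptually hardest, step is to show that restricting to completely randomized designs is without loss of generality among all pulse designs $\eta \in \PulsesDesign$. I would use the same symmetrization argument as in Theorem~\ref{th:minimax-design} and Theorem~\ref{th:partial-recycling}: given any pulse design $\eta$, average $\eta$ over the action of $\Sset_N$ on rows of $\bZ$ to obtain a symmetrized design $\bar\eta$ that has the same marginal allocation counts $(N_0,N_1,\{N_{e_t}\})$ distribution but is invariant under relabeling of units. By permutation-invariance of $\mbY$, the maximum risk of $\bar\eta$ is no larger than that of $\eta$ (since any permutation of a worst-case $\mY$ is itself in $\mbY$), so it suffices to optimize over symmetric designs; for such designs one can further condition on the vector of assignment counts and show, via convexity of $x \mapsto 1/x$ together with Jensen's inequality applied to the Neymanian variance formula, that the risk is minimized when counts are deterministic. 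This reduces the problem to the stated integer optimization.

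The main obstacle will be carefully tracking the dependence introduced by the overlap of augmented control sets across different $t$: each pulse unit $i$ assigned to $\be_{t'}$ with $t' \leq t-k$ or $t' > t$ contributes observations to multiple estimators $\hat\beta_t$ and $\hat\beta_{t''}$. In principle this creates covariance terms between different $\hat\beta_t$, but since the loss $L(\bZ,\mY)$ is a sum of squared errors (not a sum of covariances of one estimator with itself), these cross-$t$ covariances are irrelevant for the risk; I would make this explicit at the start to justify the clean additive decomposition of the risk over $t$. Once this is noted, the proof reduces to bookkeeping: unbiasedness of each estimator (via Assumptions~\ref{a:no-if},~\ref{a:no-anticipating},~\ref{a:k-recycling}), Neymanian variance formulas for each component, worst-case bounding over $\mbY$, and the symmetrization argument. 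The final claim about the integer optimization problem then follows immediately.
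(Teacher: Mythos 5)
Your proposal is correct and follows essentially the same route as the paper: symmetrize over $\Sset_N$, reduce to mixtures of completely randomized designs, apply the Neymanian variance formulas with the augmented control group of size $N_{t,k}$ (which under complete randomization is itself a simple random sample of the $N$ units, so the standard formulas apply with $N_{t,k}$ in place of $N_0$), and take the worst case over the bounded, permutation-invariant schedule set. The only details you gloss over, which the paper handles explicitly, are the verification that the new loss built from $\hat\beta_t$ is still permutation-equivariant (i.e., that $i\in\Cset_k^{(t)}(\pi\cdot\bZ)$ iff $\pi^{-1}(i)\in\Cset_k^{(t)}(\bZ)$, so the symmetrization lemma still applies) and the treatment of the negative finite-population covariance terms $-V^{(t)}_{0,e_t}/N$ and $-V^{(t)}_{1,e_t}/N$, which must be shown to vanish at the worst-case schedule for the maximum risk to factorize into a constant times the stated allocation objective.
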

As before the optimal design is completely randomized but the  number of units assigned to each treatment arm 
differs from previous designs. The integer optimization problem
it involves (as well as its relaxation) is difficult to solve analytically. 
We plan to address this problem in future work.

\begin{remark}
The results of this section are specific to pulse
  designs in contrast to the results of Sections~\ref{sec:minimax_design},
  \ref{sec:minimax-partial} and \ref{sec:weighted-loss}, which also apply to wedge
  designs. Indeed, the fundamental idea of ``recycling" units is that if we wait long
  enough after a pulse, the units assigned to the pulse behave like control units. This does not apply when
  wedge designs are used, since units remain treated after the pulse.
\end{remark}

%%%%%%%%%%%%%%%%%%%%%%%%%%%%%%%%%%%%%%%%%%%%%%%%%%%%%%%%%%%%%%%%%%%%%%%%%%%%%

%%%%%%%%%%%%%%%%%%%%
\section{Simulations}
\label{sec:simulations}
%%%%%%%%%%%%%%%%%%%%

This section illustrates visually two aspects of our theory. In
Section~\ref{subsection:alloc}, we compute the number of units allocated to each
treatment arms for three designs: the balanced completely randomized design
(BCRD) that assigns the same number of units to all $T+1$ treatment arms, the
minimax design of Section~\ref{sec:minimax_design} and the augmented minimax
design of Section~\ref{sec:minimax-partial}. In Section~\ref{subsection:max-risk}
we quantify the reduction in maximum risk from using our minimax designs,
compared to the BCRD. This paper focuses exclusively on designs that minimize the
maximum risk: our theory says nothing about the expected risk. One cannot
reasonably expect minimax optimal designs to also be optimal for minimizing the
expected risk --- this is the price to pay for generality. Nevertheless, in
Appendix, we compare the expected risk of our minimax design to that of the
BCRD under two simple models, and show that our designs improve the expected
risk slightly.

% The goal of this section is to compare three designs: our minimax design of
% Section~\ref{sec:minimax_design}; our augmented minimax
% design of Section~\ref{sec:minimax-partial}; and the balanced completely
% randomized design (BCRD) that assigns the same number of units to all $T+1$
% treatment arms. We consider the following dimensions for our comparison: the
% number of units allocated to each treatment~(Section~\ref{subsection:alloc}),
% the maximum risk~(Section~\ref{subsection:max-risk}), and other metrics from
% the risk distribution under various models for the potential
% outcomes~(Section~\ref{subsection:risk}).

\subsection{Treatment allocation}
\label{subsection:alloc}
%%%%%%%%%%%

We start by illustrating visually the optimal allocations obtained analytically in
Proposition~\ref{prop:optimization} and Proposition~\ref{prop:solution-relaxation}.
Figure~\ref{fig:allocation} shows the optimal allocation for the BCRD, the minimax
design and the augmented minimax design for $N=10000$ and for $T = 5, 10, 15$. 

As expected, the BCRD produces a fully symmetric allocation, shown as a horizontal blue
line. The standard minimax design of Section~\ref{sec:minimax_design} produces an
allocation that is only partially symmetric between two groups: $N_1=N_0$ and $N_{e_t}=N_{e_t'}$ for all $2\leq t,t'\leq T$; see Proposition~\ref{prop:optimization} and the subsequent discussion for details on such symmetry. We also confirm visually that the minimax
optimal design allocates more units than the BCRD to the always-treated arm
$(\bz = \one)$ and always-control arm $(\bz = \zero)$, but less in the pulses arms $(\bz = \be_t)$. On the other hand, the minimax design with augmented controls does not exhibit
a symmetry. It allocates very few units to the always-control arms: this is expected since
in this setting, some pulse units can be used as controls at each time $t$. We also
see that the number of units assigned to the pulse assignments $\{\be_t\}_{t=2}^T$ increases for larger values of $t$. This again is consistent with our intuition: as $t$
increases, the number of time periods for which a pulse can be used as control increases.
\begin{figure}[t!]
	\centering
	\includegraphics[width=1.1\textwidth]{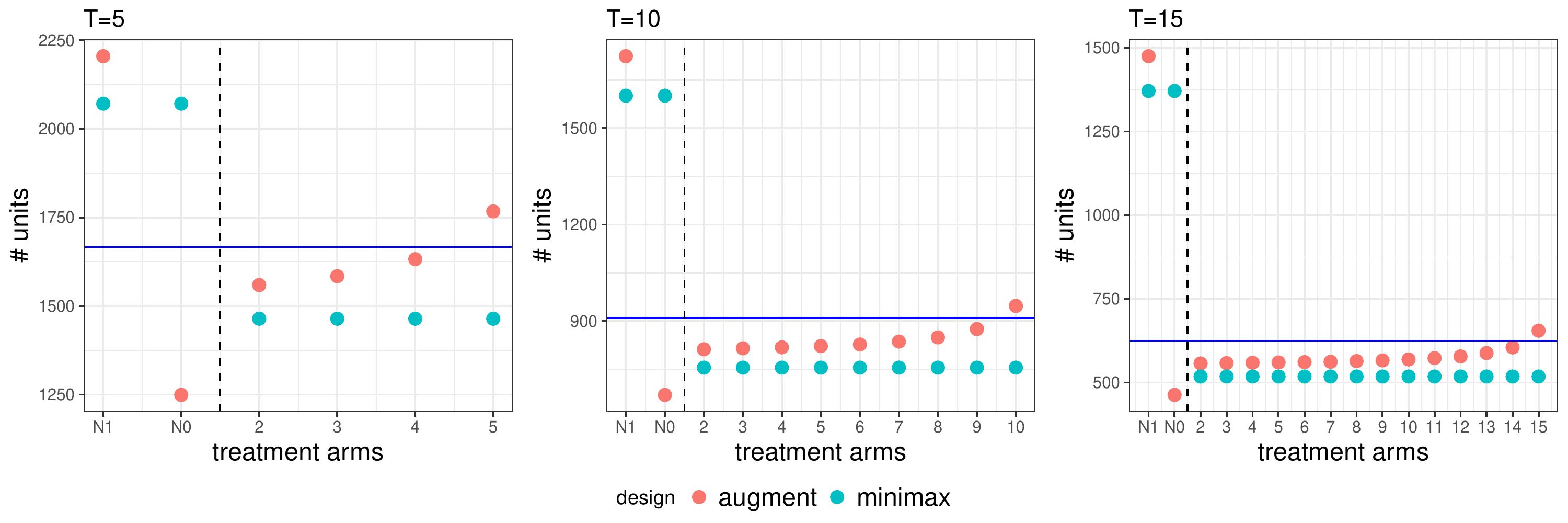}
	\vspace{-0.1in}\caption{ Unit allocations for BCRD, minimax design, and minimax design with augmented control. Blue horizontal line corresponds to BCRD, as a benchmark. In x-axis, $N_1$, $N_0$ correspond to always treated and always control, the numbers $2,\ldots, T$ correspond to the pulse designs $N_{e_2}, \ldots, N_{e_T}$.}
	\label{fig:allocation}  
\end{figure}

\subsection{Maximum risk}
\label{subsection:max-risk}
%%%%%%%%%%%

The risk under the augmented minimax design should be lower than
the risk under the minimax design, which in turn should be lower than the risk
under BCRD. In this section, we illustrate the magnitude of that reduction.
Figure~\ref{fig:maxrisk} plots the maximum risk for the minimax and augmented minimax
designs relative to the maximum risk of BCRD, for $N=1000$ and
$T = 10, 20, 30, 40, 50$.

We consider two settings, as shown in Figure~\ref{fig:maxrisk}. In the left panel, the estimator $\hat{\gamma}$ is used
for the risk under the augmented minimax but not for the risk under the other
designs; in the right panel, the estimator $\hat\gamma$ is used with all three
designs. In both settings, the maximum risk is always lower for the augmented minimax
design as predicted by theory. Compared to BCRD, the augmented minimax
design reduces the risk by up to $20\%$ compared to when $\hat{\lambda}$
is used for all designs. Given the trend further reduction would be expected for longer
time horizons. 

\begin{figure}[t!]
	\centering
	\includegraphics[width=0.9\textwidth]{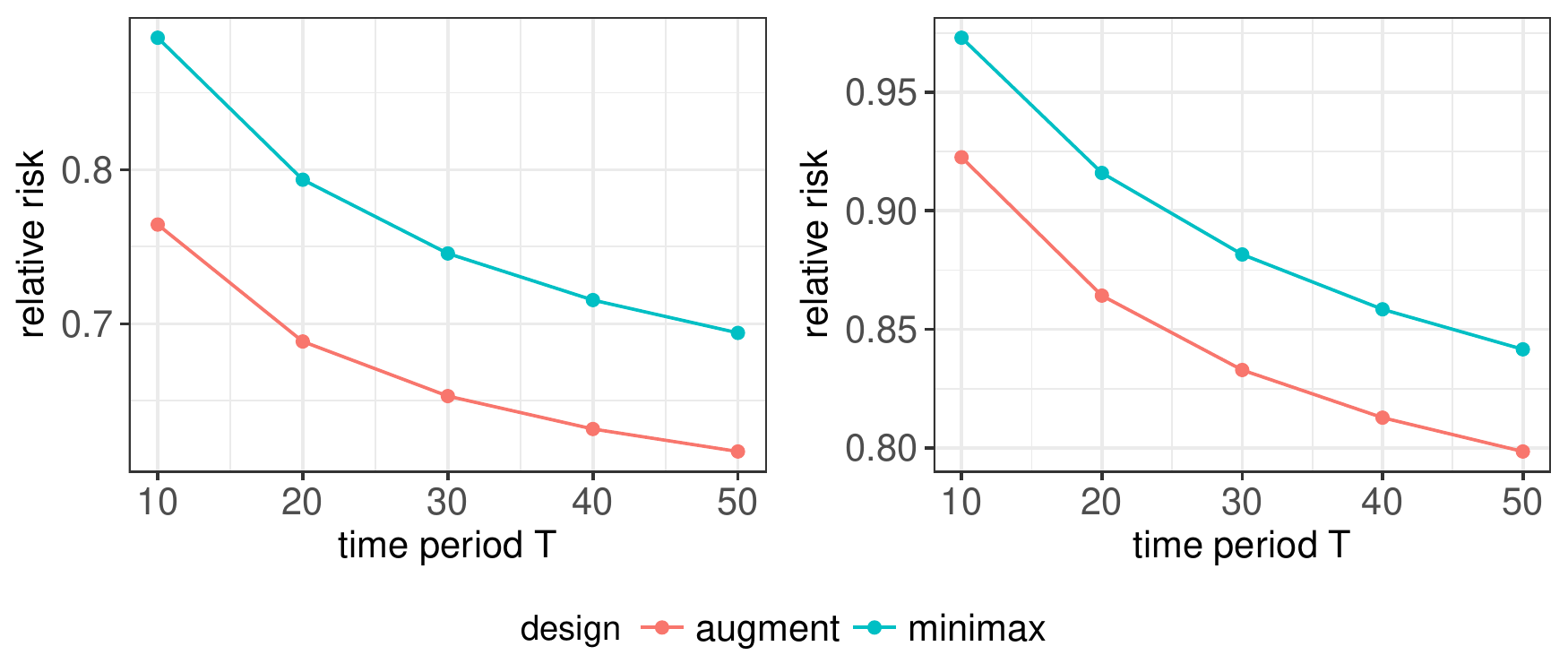}
	\vspace{-0.1in}\caption{
	Relative ratio of maximum risk using BCRD used as the baseline. Left: only augmented minimax design has augmented units. Right: both designs have augmented units.}
	\label{fig:maxrisk}  
\end{figure}

%%%%%%%%%%%%%%%%%%%%
\section{Concluding remarks}
\label{sec:conclusion}
%%%%%%%%%%%%%%%%%%%%

In this paper, we have constructed minimax optimal designs for estimating
the \dyn\ and instantaneous effects in temporal experiments. Our construction
uses the potential outcomes framework of causal inference, and is nonparametric
with mild assumptions on the support of potential outcomes. % We have shown in
% simulation studies that in addition to minimizing the maximum risk, our minimax
% designs also tend to reduce the expected risk. 

There are several open questions for future work.
First, from a technical perspective, it would be interesting to obtain 
analytical solutions for the recycling designs in Theorem~\ref{th:recycling}.
These should outperform the augmented minimax design because they wouldn't
throw away useful data. Second, in future work we would like to connect more
formally our estimands ($\lambda_t,\delta_t$ in Definition~\ref{def:estimands})
to estimands of long-term effects in user behavior experiments, which are popular
in digital
experimentation~\citep{hohnhold2015focusing, yan2019measuring, kohavi2009controlled}.
% Estimating these long-term effects requires extrapolating from short-term effects such as those considered in this paper, which can be estimated optimally using the minimax designs we propose. 
Third, we are looking at extensions of our approach, allowing covariate information
to be incorporated in the design.
Finally, our designs minimize the maximum risk, but they may be far from optimal
with respect to the expected risk. In future work, we plan on exploring
optimal designs for complex models of habituation.

\section{Acknowledgments}
We would like thank organizers and participants at the Google Market Workshop, and at the
LinkedIn Seminar for useful comments and feedback.
% add later. Panos Toulis is grateful for the John E. Jeuck Fellowship at Booth School.

\bibliographystyle{chicago}
\bibliography{ref}

\newpage

\appendix

\singlespace
\small
%%%%%%%%%%%
\section{Proof of results in Section~\ref{sec:minimax_design}}
%%%%%%%%%%%

\subsection{Intermediate results and lemmas}
%%%%%%%%%%%

The proof of Theorem~\ref{th:minimax-design} has a number of intermediate steps, which
we will state as lemmas.

\smallskip

First, we define:
\begin{equation*}
	L_\lambda(\bZ, \mY) = \sum_{t=2}^T (\hat{\lambda}_t - \lambda_t)^2; \qquad
	L_\delta(\bZ, \mY) = \sum_{t=2}^T (\hat{\delta}_t - \delta_t)^2
\end{equation*}
so that we have $L(\bZ, \mY) = L_\lambda(\bZ, \mY) + L_\delta(\bZ, \mY)$. 
Next, we need to define the permutation actions on vectors and matrices.

\begin{definition}
  Let $\Sset_N$ be the symmetric group on $N$ elements, and $\pi \in \Sset_N$.
  If $\mathbf{v}$ is a vector of length $N$, then the action of $\pi$ on $\mathbf{v}$,
  denoted $\pi \cdot \mathbf{v}$, is the vector obtained by permuting the indices of 
  $\mathbf{v}$ according to $\pi$. Similarly, for any 
  population assignment $\bZ = [\bZ_1,\ldots, \bZ_N] \in \PulsesSupport \subset\{0, 1\}^{N\times T}$,
	%
%	\begin{equation*}
%		\forall \bZ = (\bZ_i)_{i=1}^N\in \bM_P^N, \quad \pi \cdot \mZ \equiv (\bZ_{\pi^{-1}(i)})_{i=1}^N
%	\end{equation*}
  \begin{equation*}
    \pi \cdot	 \bZ \equiv [\bZ_{\pi^{-1}(1)}, \ldots, \bZ_{\pi^{-1}(N)}],
  \end{equation*}	
  where $\pi^{-1}(i)$ is the element $j$ that is mapped to $i$ through $\pi$.
\end{definition}
\begin{definition}
	For a potential outcomes schedule, $\mY = [\bY(\zero), \bY(\one), \bY(\be_2), \ldots, \bY(\be_T)]$,
	we define
	\begin{equation*}
          \pi \cdot \mY \equiv [\pi \cdot \bY(\zero), \pi \cdot \bY(\one),
          \pi \cdot \bY(\be_2), \ldots, \pi \cdot \bY(\be_T)]
	\end{equation*}
	where, as above, we have $\pi \cdot \bY(\bz) \equiv (\bY_{\pi^{-1}(1)}(\bz),
          \ldots, \bY_{\pi^{-1}(N)}(\bz))$,  for every assignment $\bz\in\Pulses$.
\end{definition}
We can now state and prove a sequence of lemmas. Throughout, we denote
$\bI = \{1, \ldots, N\}$. It is immediate to see that if $\pi \in \Sset_N$, then
$\pi \cdot \bI \equiv \{\pi(i), i \in \bI\} = \bI$ and thus $\pi^{-1} \cdot \bI = \bI$.
Our first lemma is to show that the loss function, $L$, is permutation-invariant in its arguments.
%%%%

\begin{lemma}\label{lemma:loss}
For any $\pi \in \Sset_N$, and any assignment matrix $\bZ$ and schedule $\mY$,
	\begin{equation*}
		L(\pi\cdot \bZ, \pi \cdot \mY) = L(\bZ, \mY)
	\end{equation*}
\end{lemma}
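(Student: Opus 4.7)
The plan is to show that each constituent quantity inside $L$ --- the estimators $\hat\lambda_t, \hat\delta_t$ and the estimands $\lambda_t, \delta_t$ --- is individually invariant under the joint action $(\bZ,\mY)\mapsto(\pi\cdot\bZ,\pi\cdot\mY)$, from which the invariance of $L$ is immediate, since $L$ is built from squared differences of these pieces summed over $t=2,\ldots,T$.

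The proof is a straightforward change-of-variables calculation. First I would note that the counts $N_\one, N_\zero, N_{\be_t}$ depend on $\bZ$ only through the multiset of rows $\{\bZ_i\}_{i=1}^N$, so permuting units preserves them: for any $\bz\in\Pulses$, $\sum_{i}\iv((\pi\cdot\bZ)_i=\bz)=\sum_i\iv(\bZ_{\pi^{-1}(i)}=\bz)=\sum_j\iv(\bZ_j=\bz)$ after substituting $j=\pi^{-1}(i)$, which is a bijection on $\{1,\ldots,N\}$. Next, for each weighted sum appearing in $\hat\lambda_t$ or $\hat\delta_t$, such as $\sum_i\iv(\bZ_i=\bz)Y_{it}(\bz)$, the same substitution yields
\begin{equation*}
\sum_{i=1}^N\iv((\pi\cdot\bZ)_i=\bz)(\pi\cdot\bY(\bz))_{it}=\sum_{i=1}^N\iv(\bZ_{\pi^{-1}(i)}=\bz)Y_{\pi^{-1}(i),t}(\bz)=\sum_{j=1}^N\iv(\bZ_j=\bz)Y_{jt}(\bz),
\end{equation*}
so each numerator is invariant; combined with invariance of the denominators, this gives $\hat\lambda_t(\pi\cdot\bZ,\pi\cdot\mY)=\hat\lambda_t(\bZ,\mY)$ and similarly $\hat\delta_t(\pi\cdot\bZ,\pi\cdot\mY)=\hat\delta_t(\bZ,\mY)$.

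For the estimands $\lambda_t, \delta_t$, which depend only on $\mY$, the same reindexing step gives $\lambda_t(\pi\cdot\mY)=\frac{1}{N}\sum_i[Y_{\pi^{-1}(i),t}(\one)-Y_{\pi^{-1}(i),t}(\be_t)]=\lambda_t(\mY)$, and analogously for $\delta_t$. Substituting the four invariances into $L(\bZ,\mY)=\sum_{t=2}^T(\hat\lambda_t-\lambda_t)^2+\sum_{t=2}^T(\hat\delta_t-\delta_t)^2$ finishes the argument.

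There is no real obstacle --- the only thing to be careful about is bookkeeping of the $\pi^{-1}$ that appears in the definitions of $\pi\cdot\bZ$ and $\pi\cdot\bY(\bz)$, and the fact that $\pi^{-1}$ is itself a bijection of $\{1,\ldots,N\}$ so reindexing is legal. The lemma is essentially a statement that the estimators and estimands are \emph{symmetric functions} of the unit-level data, and this symmetry is what permits the minimax problem to be reduced to a completely randomized design in the ensuing arguments.
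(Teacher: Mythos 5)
Your proof is correct and follows essentially the same change-of-variables argument as the paper's: reindex each sum by $j=\pi^{-1}(i)$ to show the estimands and estimators are individually invariant, then conclude for $L$. You are in fact slightly more careful than the paper in explicitly noting that the counts $N_\one, N_\zero, N_{\be_t}$ are themselves invariant under permutation of the rows of $\bZ$, a point the paper's proof uses implicitly.
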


\begin{proof}
  We show that $L_\delta(\pi\cdot \bZ, \pi \cdot \mY) = L_\delta(\bZ, \mY)$. The
  proof for $L_\lambda$ is identical, and so the proof for $L$ follows immediately. 
	
  \bigskip
  
  \noindent Recall that $\delta_t(\mY) = N^{-1} \sum_i Y_{it}(\be_t) -
  N^{-1} \sum_i Y_{it}(\zero)$. We have:
  \begin{flalign}
    \delta_t(\pi \cdot \mY) 
    &= N^{-1} \sum_{i\in \bI} (\pi \cdot \bY(\be_t))_{it} 
    - N^{-1} \sum_{i\in\bI} (\pi \cdot \bY(\zero))_{it} \notag\\
    &= N^{-1} \sum_{i\in\bI} Y_{\pi^{-1}(i),t}(\be_t) 
    - N^{-1} \sum_{i\in\bI} Y_{\pi^{-1}(i),t}(\zero) \notag \\	
    &= N^{-1} \sum_{j \in \pi^{-1} \cdot \bI} Y_{jt}(\be_t) 
    - N^{-1} \sum_{j\in \pi^{-1} \cdot \bI} Y_{jt}(\zero) \notag \\
    &= N^{-1} \sum_{j \in \bI} Y_{jt}(\be_t) - N^{-1}\sum_{j\in \bI} Y_{jt}(\zero)
    \notag \\
    &= \delta_t(\mY) \label{eq:estimand}.
  \end{flalign}
  
  \noindent We now turn to the estimator $\hat{\delta}_t$,
  \begin{flalign*}
    \hat{\delta}_t( \bZ, \mY) 
    &= N_{e_t}^{-1} \sum_{i\in \bI} \iv(\bZ_i = \be_t)Y_{it}(\be_t)
    -    N_0^{-1} \sum_{i\in \bI} \iv(\bZ_i = \zero_t)Y_{it}(\zero_t) \\
    & \equiv \hat{\delta}_t^{(\be_t)}( \bZ, \mY)  - \hat{\delta}_t^{(\zero)}( \bZ, \mY). 
  \end{flalign*}
With a similar argument as before:
	\begin{flalign*}
		 \hat{\delta}_t^{(\be_t)}( \pi \cdot \bZ, \pi \cdot \mY) 
		 &= N_{e_t}^{-1} \sum_{i \in \bI} \iv( (\pi \cdot \bZ)_i = \be_t) (\pi \cdot \bY(\be_t))_{it} \\
		 &= N_{e_t}^{-1} \sum_{i \in \bI} \iv( \bZ_{\pi^{-1}(i)} = \be_t) Y_{\pi^{-1}(i),t}(\be_t)\\
		 &= N_{e_t}^{-1} \sum_{j \in \pi^{-1} \cdot \bI} \iv(\bZ_j = \be_t) Y_{jt}(\be_t) \\
		 &= N_{e_t}^{-1} \sum_{j \in \bI} \iv(\bZ_j = \be_t) Y_{jt}(\be_t) \\
		 &= \hat{\delta}_t^{(\be_t)}(\bZ, \mY).
	\end{flalign*}
	The exact same reasoning leads to 
	$\hat{\delta}_t^{(\zero)}( \pi \cdot \bZ, \pi \cdot \mY) = \hat{\delta}_t^{(\zero)}(\bZ, \mY) $, and so:
	\begin{equation}\label{eq:estimator}
		\hat{\delta}_t(\pi \cdot \bZ, \pi \cdot \mY) = \hat{\delta}_t( \bZ, \mY).
	\end{equation}
	Putting together Equation~\eqref{eq:estimand} and Equation~\eqref{eq:estimator} we have:
	\begin{flalign*}
	L_\delta(\pi \cdot \bZ, \pi \cdot \mY) 
	&= \sum_{t=1}^T (\hat{\delta}_t(\pi\cdot \bZ, \pi \cdot \mY) - \delta_t(\pi \cdot \mY))^2 \\
	&= \sum_{t=1}^T (\hat{\delta}_t(\bZ, \mY) - \delta_t(\mY))^2 \\
	&= L_\delta(\bZ, \mY). 
	\end{flalign*}
\end{proof}

%%%%

\begin{lemma}\label{lemma:symmetrization}
  For a pulse design $\eta \in \PulsesDesign$ and $\pi \in \Sset_N$, let
  $\eta_\pi$ be the design such that $\eta_\pi(\bZ) = \eta(\pi \cdot \bZ)$, and let
  $\tilde{\eta} = (N!)^{-1} \sum_{\pi \in \Sset_N} \eta_\pi$. Then, if $\mbY(\Yset)$ is permutation-invariant, we have:
  \begin{equation*}
    \max_{\mY \in \mbY(\Yset)}\{r(\tilde{\eta}; \mY)\} \leq \max_{\mY \in \mbY(\Yset)}\{r(\eta; \mY)\}.
  \end{equation*}	
\end{lemma}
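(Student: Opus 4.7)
The plan is to combine Lemma~\ref{lemma:loss} with a change-of-variables argument in the expectation defining the risk, to rewrite the risk of the symmetrized design $\tilde{\eta}$ as an average over $\pi \in \Sset_N$ of risks of $\eta$ at permuted schedules $\pi \cdot \mY$. Permutation invariance of $\mbY(\Yset)$ will then bound each term in this average by the worst-case risk of $\eta$, giving the desired inequality.

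Concretely, I would first fix $\mY \in \mbY(\Yset)$ and compute $r(\eta_\pi, \mY)$ for a single $\pi$. Expanding,
\begin{equation*}
r(\eta_\pi, \mY) = \sum_{\bZ \in \PulsesSupport} \eta_\pi(\bZ)\, L(\bZ, \mY) = \sum_{\bZ \in \PulsesSupport} \eta(\pi \cdot \bZ)\, L(\bZ, \mY).
\end{equation*}
The map $\bZ \mapsto \pi \cdot \bZ$ is a bijection on $\PulsesSupport$ (it merely permutes rows, each of which stays in $\Pulses$), so substituting $\bZ' = \pi \cdot \bZ$ yields
\begin{equation*}
r(\eta_\pi, \mY) = \sum_{\bZ' \in \PulsesSupport} \eta(\bZ')\, L(\pi^{-1} \cdot \bZ', \mY).
\end{equation*}
Applying Lemma~\ref{lemma:loss} with the permutation $\pi$, I have $L(\pi^{-1} \cdot \bZ', \mY) = L(\pi \cdot \pi^{-1} \cdot \bZ', \pi \cdot \mY) = L(\bZ', \pi \cdot \mY)$, so $r(\eta_\pi, \mY) = r(\eta, \pi \cdot \mY)$.

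Next, averaging over $\pi \in \Sset_N$ and using linearity of expectation,
\begin{equation*}
r(\tilde{\eta}, \mY) = \frac{1}{N!} \sum_{\pi \in \Sset_N} r(\eta_\pi, \mY) = \frac{1}{N!} \sum_{\pi \in \Sset_N} r(\eta, \pi \cdot \mY).
\end{equation*}
By the permutation-invariance hypothesis on $\mbY(\Yset)$, each $\pi \cdot \mY$ lies in $\mbY(\Yset)$, so $r(\eta, \pi \cdot \mY) \leq \max_{\mY' \in \mbY(\Yset)} r(\eta, \mY')$ for every $\pi$. Averaging over $\pi$ then gives $r(\tilde{\eta}, \mY) \leq \max_{\mY' \in \mbY(\Yset)} r(\eta, \mY')$, and since the right-hand side does not depend on $\mY$, taking the supremum over $\mY \in \mbY(\Yset)$ on the left yields the claim.

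The argument has no real obstacle beyond a bookkeeping subtlety: one must be careful that the action of $\pi$ on $\bZ$ and the action on $\mY$ are being composed consistently when invoking Lemma~\ref{lemma:loss}, and that the change of variables in the sum defining $r(\eta_\pi, \mY)$ is legitimate. Both reduce to the fact that $\pi \cdot (\,\cdot\,)$ is a bijection on $\PulsesSupport$ and on $\mbY(\Yset)$, which I have noted above. A minor sanity check worth mentioning is that $\tilde{\eta}$ is itself a pulse design, i.e., a probability distribution supported on $\PulsesSupport$: this follows because each $\eta_\pi$ is, and $\tilde{\eta}$ is a convex combination of the $\eta_\pi$'s.
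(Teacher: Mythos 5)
Your proof is correct and follows essentially the same route as the paper's: both rest on the loss invariance of Lemma~\ref{lemma:loss}, the bijectivity of $\bZ \mapsto \pi \cdot \bZ$ on $\PulsesSupport$, and the permutation invariance of $\mbY(\Yset)$, differing only in that you first establish the clean identity $r(\eta_\pi; \mY) = r(\eta; \pi \cdot \mY)$ and then average, whereas the paper pushes the maximum inside the average before performing the same change of variables. The argument is sound as written.
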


\begin{proof}
We have:
\begin{flalign*}
	r(\tilde{\eta}; \mY) 
	&= \sum_{\bZ \in \PulsesSupport} \tilde{\eta}(\bZ) L(\bZ, \mY) \\
	&= \sum_{\bZ \in \PulsesSupport} \bigg[ (N!)^{-1} \sum_{\pi \in \Sset_N}\eta_\pi(\bZ)\bigg]L(\bZ, \mY) \\
	&= (N!)^{-1} \sum_{\pi \in \Sset_N} \sum_{\bZ \in \PulsesSupport} \eta_\pi(Z) L(\bZ, \mY).
\end{flalign*}
It follows that
\begin{flalign*}
\max_{\mY\in \mbY(\Yset)}{r(\tilde{\eta}; \mY)} 
&\leq (N!)^{-1} \sum_{\pi \in \Sset_N} \max_{\mY \in \mbY(\Yset)}\{\sum_{\bZ \in \PulsesSupport} \eta_\pi(\bZ)L(\bZ, \mY)\}\\
&= (N!)^{-1} \sum_{\pi \in \Sset_N} \max_{\mY \in \mbY(\Yset)}\{\sum_{\bZ \in \PulsesSupport} \eta(\pi \bZ)L(\bZ, \mY)\} \\
&=(N!)^{-1} \sum_{\pi \in \Sset_N} \max_{\mY \in \mbY(\Yset)}\{\sum_{\bZ \in \PulsesSupport} \eta(\pi\bZ)L(\pi\bZ, \pi\mY)\} \quad \text{[by Lemma~\ref{lemma:loss}]}\\
&=(N!)^{-1} \sum_{\pi \in \Sset_N} \max_{\mY \in \mbY(\Yset)}\{\sum_{\bZ' \in \pi \cdot \PulsesSupport} \eta(\bZ')L(\bZ', \pi\mY)\}\\
&=(N!)^{-1} \sum_{\pi \in \Sset_N} \max_{\mY \in \mbY(\Yset)}\{\sum_{\bZ' \in \PulsesSupport} \eta(\bZ')L(\bZ', \pi\mY)\},
\end{flalign*}
where the last equality follows from the fact that the support $\PulsesSupport$ is
permutation invariant. Now we use the fact that $\mbY(\Yset)$ is permutation invariant:
\begin{flalign*}
\max_{\mY \in \mbY(\Yset)}\{\sum_{\bZ' \in \PulsesSupport} \eta(\bZ')L(\bZ', \pi\mY)\}
&= \max_{\mY' \in \pi \cdot \mbY(\Yset)}\{\sum_{\bZ' \in \PulsesSupport} \eta(\bZ')L(\bZ', \mY') \} \\
&= \max_{\mY' \in \mbY(\Yset)}\{\sum_{\bZ' \in \PulsesSupport} \eta(\bZ')L(\bZ', \mY') \} \\
&= \max_{\mY' \in \mbY(\Yset)} r(\eta; \mY').
\end{flalign*}
Putting everything together, we obtain:
\begin{flalign*}
\max_{\mY\in \mbY(\Yset)}{r(\tilde{\eta}; \mY)} 
&\leq (N!)^{-1} \sum_{\pi \in \Sset_N} \max_{\mY \in \mbY(\Yset)}\{r(\eta; \mY)\}\\
&= \max_{\mY \in \mbY(\Yset)} \{r(\eta; \mY)\} \cdot (N!)^{-1} \sum_{\pi \in \Sset_N} 1 \\
&= \max_{\mY \in \mbY(\Yset)} r(\eta; \mY).
\end{flalign*}
\end{proof}

%%%%
Next, we prove a representation lemma.
\begin{lemma}\label{lemma:crd}
  Let $\delta_{\bZ}$ be the design that assigns mass 1 at the assignment $\bZ$.
  Let $\eta \in \PulsesDesign$. Then:
  \begin{equation*}
    \tilde{\eta} = \sum_{\bZ \in \PulsesSupport} \eta(\bZ) \tilde{\delta}_{\bZ},
  \end{equation*}
  where $\tilde{\delta}_{\bZ} =  (N!)^{-1} \sum_{\pi \in \Sset_N} \xi_{\pi,\bZ}$  and $\xi_{\pi, \bZ}(\bZ') = \delta_{\bZ}(\pi \cdot \bZ')$.
\end{lemma}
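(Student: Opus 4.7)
The plan is to verify this identity by a direct computation, evaluating both sides on an arbitrary point $\bZ' \in \PulsesSupport$ and confirming they agree. The core observation is that the point mass $\delta_{\bZ}$, when composed with a permutation action, behaves as an indicator, so double sums can be collapsed cleanly.

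First I would recall from Lemma~\ref{lemma:symmetrization} that $\tilde\eta = (N!)^{-1} \sum_{\pi \in \Sset_N} \eta_\pi$, so that $\tilde\eta(\bZ') = (N!)^{-1} \sum_{\pi} \eta(\pi \cdot \bZ')$. Then I would expand the right-hand side: for a fixed $\bZ'$,
\begin{equation*}
\sum_{\bZ \in \PulsesSupport} \eta(\bZ)\,\tilde\delta_{\bZ}(\bZ')
= (N!)^{-1} \sum_{\pi \in \Sset_N} \sum_{\bZ \in \PulsesSupport} \eta(\bZ)\, \delta_{\bZ}(\pi \cdot \bZ').
\end{equation*}
The inner sum is nothing but the evaluation of the measure $\eta$ at the atom $\pi \cdot \bZ'$: since $\delta_{\bZ}(\pi\cdot \bZ')$ is $1$ when $\bZ = \pi\cdot \bZ'$ and $0$ otherwise, the sum over $\bZ$ picks out exactly the term $\eta(\pi \cdot \bZ')$.

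Substituting back yields $(N!)^{-1}\sum_{\pi} \eta(\pi\cdot \bZ') = \tilde\eta(\bZ')$, which matches the expression from Lemma~\ref{lemma:symmetrization}. Since this holds for every $\bZ' \in \PulsesSupport$, the identity as measures follows. I do not foresee any real obstacle here: the lemma is essentially bookkeeping that records the fact that symmetrization commutes with taking convex combinations of point masses. The only care needed is to keep the roles of $\bZ$ (the atom in the outer mixture) and $\bZ'$ (the argument at which we evaluate the resulting measure) straight, and to use $\xi_{\pi, \bZ}(\bZ') = \delta_\bZ(\pi \cdot \bZ')$ exactly as defined, rather than accidentally writing $\delta_{\pi\cdot \bZ}(\bZ')$, which would require an extra change of variable in $\pi$ to obtain the same conclusion.
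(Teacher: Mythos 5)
Your proof is correct and is essentially the paper's argument in pointwise form: the paper performs the same collapse by rewriting $\eta_\pi$ as a combination of point masses and changing variables over the permutation-invariant support, while you evaluate both measures at an arbitrary $\bZ'$ and use the sifting property of $\delta_{\bZ}$; the identification $\xi_{\pi,\bZ}=\delta_{\pi^{-1}\cdot\bZ}$ you are careful about is exactly the step the paper makes explicit. No gap.
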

\begin{proof}
From $\pi \cdot \PulsesSupport = \PulsesSupport$ it follows, for all $\pi\in\Sset_N$, that
\begin{flalign*}
 \eta_\pi  = \sum_{\bZ \in \PulsesSupport} \eta(\pi \cdot \bZ)
  \delta_{\bZ} 
  &= \sum_{\bZ' \in \pi \cdot \PulsesSupport} \eta(\bZ') \delta_{\pi^{-1}\cdot\bZ'}= \sum_{\bZ' \in \PulsesSupport} \eta(\bZ') \delta_{\pi^{-1}\cdot\bZ'}.
\end{flalign*}
By definition, $\delta_{\pi^{-1} \cdot \bZ} = \xi_{\pi, \bZ}$ since both function 
out mass 1 at the treatment $\pi^{-1} \cdot \bZ$. Then, from its definition in Lemma~\ref{lemma:symmetrization}:
%
%\begin{flalign*}
%\tilde{\eta} 
%&= \sum_{\bZ \in \PulsesSupport} \tilde{\eta}(\bZ) \delta_{\bZ} \\
%&= \sum_{\bZ \in \PulsesSupport} \bigg[(N!)^{-1} \sum_{\pi \in \Sset_N} \eta_{\pi}(\bZ)\bigg] \delta_{\bZ} \\
%&=\sum_{\bZ \in \PulsesSupport} (N!)^{-1} \sum_{\pi \in \Sset_N} \eta(\pi \cdot \bZ) \delta_{\bZ}.
%\end{flalign*}
%
%
\begin{flalign*}
	\tilde{\eta}  = (N!)^{-1} \sum_{\pi\in\Sset_N} \eta_\pi 
	&= \sum_{\bZ\in \PulsesSupport}\eta(\bZ) (N!)^{-1} \sum_{\pi \in \Sset_N} \xi_{\pi, \bZ} 
	= \sum_{\bZ \in \PulsesSupport} \eta(\bZ) \tilde{\delta}_{\bZ}.
\end{flalign*}
\end{proof}

%%%%

\begin{lemma}\label{lemma:max-crd}
  Let $\eta \in \PulsesDesign$ and $\mbY(\Yset)$ be a permutation-invariant schedule of potential
  outcomes, where $\Yset$ is bounded. Then,
\begin{equation*}
	\max_{\mY \in \mbY(\Yset)} \{r(\tilde{\eta}; \mY)\} 
	= V^\ast \sum_{\bZ \in \PulsesSupport} \bigg( \frac{T-1}{N_1(\bZ)} + \frac{T-1}{N_0(\bZ)}
	+ 2 \sum_{t=2}^T \frac{1}{N_{e_t}(\bZ)}\bigg),
\end{equation*}
where
\begin{equation*}
	V^\ast = \max_{\bx \in \Yset} \frac{1}{N-1} \sum_{i=1}^N (\bx_i - \bar{\bx})^2 = O(1).
\end{equation*}
\end{lemma}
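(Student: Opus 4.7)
The plan is to combine Lemma~\ref{lemma:crd}, which writes $\tilde{\eta}$ as a convex combination of symmetrized point masses $\tilde{\delta}_{\bZ}$, with the classical Neyman variance identity for multi-arm completely randomized designs. First, linearity of the risk in the design gives
\begin{equation*}
r(\tilde{\eta}; \mY) = \sum_{\bZ \in \PulsesSupport} \eta(\bZ)\, r(\tilde{\delta}_{\bZ}; \mY),
\end{equation*}
and each $\tilde{\delta}_{\bZ}$ is exactly the completely randomized design that assigns $N_1(\bZ)$, $N_0(\bZ)$ and $\{N_{e_t}(\bZ)\}_{t=2}^{T}$ units to the arms in $\Pulses$ (it uniformly randomizes over all row permutations of $\bZ$). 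Under $\tilde{\delta}_{\bZ}$ the plug-in estimators $\hatl_t$ and $\hatd_t$ are unbiased, so the risk reduces to a sum of variances.

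Next I would invoke the Neyman identity for differences of sample means under a multi-arm CRD, which gives, for each $t$,
\begin{align*}
\mathrm{Var}_{\tilde{\delta}_{\bZ}}(\hatl_t) &= \tfrac{S_{\one,t}^{2}}{N_1(\bZ)} + \tfrac{S_{\be_t,t}^{2}}{N_{e_t}(\bZ)} - \tfrac{S_{\lambda_t}^{2}}{N}, \\
\mathrm{Var}_{\tilde{\delta}_{\bZ}}(\hatd_t) &= \tfrac{S_{\be_t,t}^{2}}{N_{e_t}(\bZ)} + \tfrac{S_{\zero,t}^{2}}{N_0(\bZ)} - \tfrac{S_{\delta_t}^{2}}{N},
\end{align*}
where $S_{\bz,t}^{2}$ is the finite-population variance of $\byy_t(\bz)$, and $S_{\lambda_t}^{2}, S_{\delta_t}^{2}$ are the finite-population variances of the unit-level contrasts $Y_{it}(\one)-Y_{it}(\be_t)$ and $Y_{it}(\be_t)-Y_{it}(\zero)$. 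This exhibits $r(\tilde{\delta}_{\bZ}; \mY)$ as an explicit, transparent function of $\mY$.

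The decisive step is the adversarial maximization over $\mY \in \mbY(\Yset)$. The key observation is that at each $t$ the adversary may set $\byy_t(\one)=\byy_t(\be_t)=\byy_t(\zero)$ equal to a common vector $\bx^\ast \in \Yset$ attaining $V^\ast$: this simultaneously zeroes out the negative corrections $S_{\lambda_t}^{2}/N$ and $S_{\delta_t}^{2}/N$ and saturates each positive variance term at $V^\ast$. Because different values of $t$ involve disjoint columns of $\bY(\one), \bY(\be_t), \bY(\zero)$, a single schedule $\mY^\star$ attains the maximum of every summand at once; the permutation invariance of $\Yset$ and the product structure of $\mbY(\Yset)$ are precisely what make this diagonal choice admissible. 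Summing over $t$ yields $\max_{\mY} r(\tilde{\delta}_{\bZ}; \mY) = V^\ast \bigl( \tfrac{T-1}{N_1(\bZ)} + \tfrac{T-1}{N_0(\bZ)} + 2 \sum_{t=2}^{T} \tfrac{1}{N_{e_t}(\bZ)} \bigr)$, and since $\mY^\star$ does not depend on $\bZ$, the maximum commutes with the convex combination in $\eta$, delivering the stated identity. Boundedness of $\Yset$ gives $V^\ast = O(1)$. I expect the main technical obstacle to be justifying the simultaneous diagonal choice of $\mY^\star$ across all $t$ and $\bZ$; this reduces to invoking the product structure of $\mbY(\Yset)$ and does not require any deep argument.
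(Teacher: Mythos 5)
Your proposal is correct and follows essentially the same route as the paper's proof: decompose $\tilde{\eta}$ via Lemma~\ref{lemma:crd} into completely randomized components, apply the Neyman variance identities (with zero bias), and let the adversary set all relevant outcome columns equal to a common variance-maximizing vector in $\Yset$, which saturates the positive terms at $V^\ast$ and annihilates the negative cross-variance corrections. The only point worth noting is that your final display retains the weights $\eta(\bZ)$ inside the sum, which matches what the paper's proof actually derives (the weights are dropped in the lemma's statement, apparently a typo).
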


\begin{proof}
  We have:
  \begin{flalign*}
    r(\tilde{\eta}; \mY)
    & = \sum_{\bZ' \in \PulsesSupport} \tilde{\eta}(\bZ') L(\bZ', \mY) \\
    &= \sum_{\bZ' \in \PulsesSupport} \sum_{\bZ \in \PulsesSupport} \eta(\bZ) \tilde{\delta}_{\bZ}(\bZ') L(\bZ', \mY) 
    \quad [\text{from Lemma~\ref{lemma:crd}}]\\
    &= \sum_{\bZ \in \PulsesSupport} \bigg\{ 
    \sum_{\bZ' \in \PulsesSupport} \eta(\bZ) \tilde{\delta}_{\bZ}(\bZ') L(\bZ', \mY) \bigg\} \\
    &=\sum_{\bZ \in \PulsesSupport} \eta(\bZ) r(\tilde{\delta}_{\bZ}, \mY),
  \end{flalign*}
  where $\tilde{\delta}_\bZ$ is the CRD that assigns $N_0(\bZ)$ units to $\zero$,
  $N_1(\bZ)$ units to $\one$, 
  and $N_{e_t}(\bZ)$ units to pulse $\be_t$, for $t=2, \ldots, T$. The usual randomization sampling results hold \citep{imbens2015causal}:
  \begin{flalign*}
    r(\tilde{\delta}_{\bZ}; \mY) 
    &= \sum_{t=2}^T \bigg[\{\text{bias}_{\tilde{\delta}_{\bZ}}(\hat{\lambda}_t, \lambda_t)\}^2 
    + \{\text{bias}_{\tilde{\delta}_{\bZ}}(\hat{\delta}_t, \delta_t)\}^2\bigg] +
    \sum_{t=2}^T \bigg[V_{\tilde{\delta}_{\bZ}}(\hat{\lambda}_t)
    + V_{\tilde{\delta}_{\bZ}}(\hat{\delta}_t)\bigg] 
    &= \sum_{t=2}^T \bigg[V_{\tilde{\delta}_{\bZ}}(\hat{\lambda}_t)
    + V_{\tilde{\delta}_{\bZ}}(\hat{\delta}_t)\bigg], 
  \end{flalign*}
  where we defined:
  \begin{equation*}
	V_{\tilde{\delta}_{\bZ}}(\hat{\lambda}_t) =
	\frac{V^{(t)}_1}{N_1(\bZ)} + \frac{V^{(t)}_{e_t}}{N_{e_t}(\bZ)} - \frac{V^{(t)}_{1,e_t}}{N}; \qquad
	V_{\tilde{\delta}_{\bZ}}(\hat{\delta}_t) =
	\frac{V^{(t)}_0}{N_0(\bZ)} + \frac{V^{(t)}_{e_t}}{N_{e_t}(\bZ)} - \frac{V^{(t)}_{0,e_t}}{N},
\end{equation*}
and
\begin{flalign*}
	V^{(t)}_h &= (N-1)^{-1} \sum_{i=1}^N \{Y_{it}(h) - \overline{Y}_t(h)\}^2, \qquad h=\zero, \one, \be_t, 
	\quad t=2, \ldots, T\\
	V^{(t)}_{0,e_t} &= (N-1)^{-1} \sum_{i=1}^N \{ [Y_{it}(\be_t) - Y_{it}(\zero)] 
	- [\overline{Y}_t(\be_t) - \overline{Y}_t(\zero)]\}^2,\\
	V^{(t)}_{1,e_t} &= (N-1)^{-1} \sum_{i=1}^N \{ [Y_{it}(\one) - Y_{it}(\be_t)] 
	- [\overline{Y}_t(\one) - \overline{Y}_t(\be_t)]\}^2,
\end{flalign*}
where $\bar Y_t$ indicates averaging over all units.
Occasionally, we will write $V_h^{(t)}(\mY), V_{0, e_t}(\mY), V_{1, e_t}(\mY)$ to emphasize that these are functions of the potential outcomes 
schedule, $\mY$.
Note also that the bias terms are zero because the estimators, $\hat\lambda_t, \hat\delta_t$ are unbiased.
Putting everything together, we obtain:
\begin{equation}\label{eqR}
r(\tilde{\eta}; \mY) = \sum_{t=2}^T \bigg[ V^{(t)}_1 \sum_{\bZ\in \PulsesSupport} \frac{\eta(\bZ)}{N_1(\bZ)}
+ V^{(t)}_0 \sum_{\bZ\in \PulsesSupport} \frac{\eta(\bZ)}{N_0(\bZ)} 
+ 2V^{(t)}_{e_t} \sum_{\bZ\in \PulsesSupport} \frac{\eta(\bZ)}{N_{\be_t}(\bZ)}
- \frac{V_{0,e_t}^{(t)}}{N} - \frac{V_{1,e_t}^{(t)}}{N}
\bigg].
\end{equation}
Now is the key part of the argument. 
First, 
$V_h^{(t)}$ is a function only of $\by_t(h) = (Y_{1t}(h), \ldots, Y_{Nt}(h))$, 
say, $V_h^{(t)}= V(\by_t(h))$.
Furthermore, $\by_t(\bh)$ takes values from $\Yset$.
Therefore,
\begin{equation*}
 \argmax_{\by_t(\mathbf{h}) \in \Yset} V(\by_t(\mathbf{h}))
  = \text{constant} \equiv \Yset^{\opt},~\text{for all}~h,t;
\end{equation*}
here, we allow $\argmax$ to return a set. In particular, $\Yset^{\opt}$ is a set because there might be many vectors that maximize $V$; the set can be a singleton, but it is not empty.
Now, for every element $\by\in\Yset^{\opt}$, it holds
\begin{equation}\label{eqV}
\max_{\mY \in \mbY(\Yset)} V_h^{(t)} = V(\by),
\end{equation}
since $\mbY(\Yset)$ contains all potential outcomes 
schedules, such that the columns of every matrix in every schedule 
are from $\Yset$.
Take any vector $\by \in \Yset^{\opt}$. Define $\bY^{\opt}$ as the $N\times T$ matrix where each column is equal to $\by$, and define
$\mY^{\opt} = [\bY^{\opt}, \ldots, \bY^{\opt}]$ the potential outcomes schedule
containing $T+1$ copies of $\bY^{\opt}$. Then, for all $h = \zero, \one, \be_t$, and 
all $t=2,\ldots, T$, 
\begin{equation}\label{eqV2} 
	\max_{\mY \in \mbY(\Yset)} V_h^{(t)}
	= V_h^{(t)}(\mY^{\opt}) = \frac{1}{N-1} \sum_{i=1}^N(\by_i - \overline{\by})^2 \equiv V^\ast = O(1),
\end{equation}
since the potential outcomes are bounded. It then follows that:
% \pta{Can't easily follow what's next $-->$}Notice that $\bY(\one)$, $\bY(\zero)$, and
% $\bY(\be_t)$ are maximized over the same set \pta{why are these maximized? these are variables, no?} $\msY(\Yset)$ due to permutation invariance.
% More than that, for all
% $t=1, \ldots, T$ and $h=\zero, \one, \be_t$, the vector $\by_t$ is maximized~\pta{same} over the
% same set $\Yset$, so we have:

% %
% Now is the key part of the argument. \pta{Can't easily follow what's next $-->$}Notice that $\bY(\one)$, $\bY(\zero)$, and
% $\bY(\be_t)$ are maximized over the same set \pta{why are these maximized? these are variables, no?} $\msY(\Yset)$ due to permutation invariance.
% More than that, for all
% $t=1, \ldots, T$ and $h=\zero, \one, \be_t$, the vector $\by_t$ is maximized~\pta{same} over the
% same set $\Yset$, so we have:
% %
% \begin{equation*}
% 	\argmax_{\mY \in \mbY(\Yset)}(V_h^{(t)}) = \text{constant} \equiv \Yset^{\opt},~\text{for all}~h,t.
% \end{equation*}
% \pta{do we need here that $V_h^{(t)}$ is perm invariant?}
%

%
\begin{equation*}
\mY^{\opt} \in \argmax_{\mY \in \mbY(\Yset)} \bigg\{\sum_{t=2}^T 
\bigg[ V^{(t)}_1 \sum_{\bZ\in \PulsesSupport} \frac{\eta(\bZ)}{N_1(\bZ)}
+ V^{(t)}_0 \sum_{\bZ\in \PulsesSupport} \frac{\eta(\bZ)}{N_1(\bZ)} 
+ 2V^{(t)}_{e_t} \sum_{\bZ\in \PulsesSupport} \frac{\eta(\bZ)}{N_1(\bZ)} \bigg]\bigg\},
\end{equation*}
since we can maximize $V_1^{(t)}, V_0^{(t)}, V_{e_t}^{(t)}$ separately
for every $t$ by construction of the $\mY^{\opt}$ and the argument 
in Equation~\eqref{eqV}.

Now, we need to turn our attention to the negative terms in Equation~\eqref{eqR}. 
First, for all $t$, 
$$V_{0,e_t}^{(t)}(\mY^{\opt}) = V_{1,e_t}^{(t)}(\mY^{\opt}) = 0,
$$
by definition of $\mY^{\opt}$. But we also have:
\begin{equation*}
	\max_{\mY \in \mbY(\Yset)} \{-V^{(t)}_{0,e_t}(\mY) \} 
	 = \max_{\mY \in \mbY(\Yset)} \{-V^{(t)}_{1,e_t}(\mY) \} = 0,
\end{equation*}
because the $V$-functions are non-negative.
It follows that
\begin{equation}\label{eqY2}
	\mY^{\opt} \in \argmax r(\tilde{\eta}_{\bZ}; \mY).
\end{equation}
We conclude that
\begin{flalign*}
	\max_{\mY \in \mbY(\Yset)} r(\tilde{\eta}_{\bZ}; \mY) 
	&=\sum_{t=2}^T \bigg[V^\ast \sum_{\bZ \in \PulsesSupport} \eta(\bZ) 
	\bigg\{\frac{1}{N_1(\bZ)} + \frac{1}{N_0(\bZ)} + \frac{2}{N_{e_t}(\bZ)}\bigg\}\bigg] \\
	&= V^\ast \sum_{\bZ \in \PulsesSupport} \eta(\bZ) 
	\bigg\{\frac{T-1}{N_1(\bZ)} + \frac{T-1}{N_0(\bZ)} + 2\sum_{t=2}^T\frac{1}{N_{e_t}(\bZ)}\bigg\},
\end{flalign*}
where we combined Equation~\eqref{eqV2} and Equation~\eqref{eqY2}.
\end{proof}

\subsection{Proof of Theorem~\ref{th:minimax-design}}
%%%%%%%%%%%

\begin{proof}
  If $\eta$ is minimax optimal, then by Lemma~\ref{lemma:symmetrization},
  $\tilde{\eta}$ is also minimax; our strategy, therefore, will be to find the design
  of the form $\tilde{\eta}$ that achieves the minimax risk. For a design
  $\tilde{\eta}$, we have, by Lemma~\ref{lemma:max-crd}, 
\begin{flalign*}
\max_{\mY \in \mbY(\Yset)} r(\tilde{\eta}_{\bZ}; \mY) 
&= V^\ast \sum_{\bZ \in \PulsesSupport} \eta(\bZ) 
\bigg\{\frac{T-1}{N_1(\bZ)} + \frac{T-1}{N_0(\bZ)} + 2\sum_{t=2}^T\frac{1}{N_{e_t}(\bZ)}\bigg\}\\
&= \sum_{\bZ \in \PulsesSupport} \eta(\bZ) 
\bigg\{\frac{T-1}{N_1(\bZ)} + \frac{T-1}{N_0(\bZ)} + 2\sum_{t=2}^T\frac{1}{N_{e_t}(\bZ)}\bigg\},
\end{flalign*}
since $V^\ast$ does not depend on $\bZ$. It follows that
$\min_\eta \max_{\mY \in \mbY(\Yset)} r(\tilde{\eta}_{\bZ}; \mY)$ is attained by the
design $\eta$ that satisfies:
\begin{equation}\label{eq:eta-opt}
	\eta(\bZ) > 0 \quad \Rightarrow \quad (N_1(\bZ), N_0(\bZ), \{N_{e_t}(\bZ)\}) 
	= \argmin_{N'_1, N'_0, \{N'_{e_t}\}} \bigg\{\frac{T-1}{N'_1} 
	+ \frac{T-1}{N'_0} + 2\sum_{t=2}^T\frac{1}{N'_{e_t}}\bigg\}.
\end{equation}
Let $\eta^{\opt}$ be a design satisfying Equation~\eqref{eq:eta-opt}, and consider
$\tilde{\eta}^{\opt}$. For any $\bZ$ such that $\eta^{\opt}(\bZ) > 0$, we have:
\begin{flalign*}
	\tilde{\eta}^{\opt}_\pi(\bZ) 
	&= \tilde{\eta}^{\opt}(\pi \cdot \bZ) \\
	&= (N!)^{-1}\sum_{\pi' \in S_N} \eta^{\opt}_{\pi'}(\pi \cdot \bZ) \\
	&= (N!)^{-1}\sum_{\pi' \in S_N} \eta^{\opt}( \pi'\pi \cdot \bZ) \\
	&= (N!)^{-1}\sum_{\pi'' \in S_N} \eta^{\opt}( \pi'' \cdot \bZ)\\
	&= \tilde{\eta}^{\opt}(\bZ),
\end{flalign*}
for any permutation $\pi \in \Sset_N$; that is,
$\tilde{\eta}^{\opt}(\pi \cdot \bZ) = \tilde{\eta}^{\opt}(\bZ) = c$, for any permutation
$\pi \in \Sset_N$. But the permutations of $\bZ$ are the assignments such that:
\begin{equation}\label{eq:assgt-min}
(N_1(\bZ), N_0(\bZ), \{N_{e_t}(\bZ)\}) 
	= \argmin_{N'_1, N'_0, \{N'_{e_t}\}} \bigg\{\frac{T-1}{N'_1} 
	+ \frac{T-1}{N'_0} + 2\sum_{t=2}^T\frac{1}{N'_{e_t}}\bigg\}.
\end{equation}
It is easy to verify that $\tilde{\eta}^{\opt}$ assigns mass zero to the assignments
$\bZ$ that do not satisfy Equation~\eqref{eq:assgt-min}, and so in conclusion,
$\tilde{\eta}^{\opt}$ is the completely randomized design with
$N_1^{\opt}$, $N_0^{\opt}$ and $\{N_{e_t}^{\opt}\}$ satisfying
Equation~\eqref{eq:assgt-min}.

\end{proof}

\subsection{Proof of Proposition~\ref{prop:optimization}}
%%%%%%%%%%%

\begin{proof}
  % By observation, $N'_0$ and $N'_1$ are symmetric, and $\{N'_{e_t}\}_{t=2}^T$ are symmetric, so the solution should have $\tilde{N}_0=\tilde{N}_1$, and all $N'_{e_t}$, $t=2,\ldots,T$ are equal.
  The Lagrangian function corresponding to the objective function is:
  \begin{align*}
    \mathcal{L}(N'_0, N'_1, \{N'_{e_t}\}_{t=2}^T,\lambda)=\bigg( \frac{T-1}{N'_0} +\frac{T-1}{N'_1} + 2 \sum_{t=2}^T \frac{1}{N'_{et}} \bigg) +\lambda\bigg(N'_0 + N'_1 + \sum_{t=2}^T N_{e_t} - N\bigg)
  \end{align*}
  Setting the partial derivatives of the Lagrangian to zero yields the following
  system of equations:
  \begin{flalign*}
    \frac{\partial \mathcal{L}}{\partial N'_0} &= -\frac{T-1}{{N'}_0^2}+\lambda=0 \\
    \frac{\partial \mathcal{L}}{\partial N'_1} &= -\frac{T-1}{{N'}_1^2}+\lambda=0 \\
    \frac{\partial \mathcal{L}}{\partial N'_{e_t}}
    &= -2\frac{1}{(N'_{e_t})^2}+\lambda=0, \quad t=2, \ldots T
  \end{flalign*}
  whose solution we write as a function of $\lambda$:
  \begin{align*}
    N'_0 = \sqrt{\frac{T-1}{\lambda}}, \quad N'_1 = \sqrt{\frac{T-1}{\lambda}}, \quad
    N'_{e_t} = \sqrt{\frac{2}{\lambda}}
  \end{align*}
  But the total number of units assigned must add up to $N$, which allows us to solve
  for $\lambda$:
  \begin{align*}
    N'_0 + N'_1 + \sum_{t=2}^T N_{e_t}=
    2\sqrt{\frac{T-1}{\lambda}}+(T-1)\sqrt{\frac{2}{\lambda}}= N
    \implies \sqrt{\frac{1}{\lambda}}=\frac{N}{2\sqrt{(T-1)}+\sqrt{2}(T-1)}
  \end{align*}
  leading to the following solution:
  \begin{flalign*}
    \tilde{N}_1 &= \frac{N}{2 + \sqrt{2 (T-1)}}, \\
    \tilde{N}_0 &=\frac{N}{2 + \sqrt{2 (T-1)}}, \\
    \tilde{N}_{e_t} &= \sqrt{\frac{2}{T-1}} \frac{N}{2 + \sqrt{2(T-1)}}, \qquad t=2, \ldots, T.
  \end{flalign*}
  This completes the proof. 
\end{proof}

%%%%%%%%%%%
\section{Proof of results in Section~\ref{sec:minimax-partial}}
%%%%%%%%%%%

\subsection{Proof of Theorem~\ref{th:partial-recycling}}
%%%%%%%%%%%

The proof of Theorem~\ref{th:partial-recycling} follows exactly the same lines as
that of Theorem~\ref{th:recycling} which we prove below in
Appendix~\ref{appendix:recycling}.

\subsection{Proof of Proposition~\ref{prop:solution-relaxation}}
%%%%%%%%%%%

\begin{proof}
	Consider the objective function:
	\begin{equation*}
		\rho(N_0, N_{e_2}, \ldots, N_{e_T}) = \frac{T-1}{N - N_0 -\sum_{t'=2}^T N_{e_{t'}}} 
		+ 2 \sum_{t'=2}^T \frac{1}{N_{e_{t'}}} + \sum_{t'=2}^T \frac{1}{N'_t}
	\end{equation*}
	then notice that:
\begin{flalign*}
\frac{d\rho}{dN_{e_t}} &= \frac{T-1}{[N - N_0 -\sum_{t'=2}^T N_{e_{t'}}]^2} 
- \frac{2}{(N_{e_t})^2} - \sum_{t'=2}^T \iv(t' < t) \frac{1}{[N_0 + \sum_{t''=2}^T \iv(t'' > t') N_{e_{t''}}]^2} \\
\frac{d\rho}{dN_{e_{t-1}}} &= \frac{T-1}{[N - N_0 -\sum_{t'=2}^T N_{e_{t'}}]^2} 
- \frac{2}{(N_{e_{t-1}})^2} - \sum_{t'=2}^T \iv(t' < t-1) \frac{1}{[N_0 + \sum_{t''=2}^T \iv(t'' > t') N_{e_{t''}}]^2}
\end{flalign*}
Make $\frac{d\rho}{dN_{e_t}}=0$ and $\frac{d\rho}{dN_{e_{t-1}}}=0$, and the terms cancel between them by subtraction as follows:
\begin{flalign*}
\frac{2}{(N_{e_t})^2} = \frac{2}{(N_{e_{t-1}})^2} - \frac{1}{[N_0 + \sum_{t''=2}^T \iv(t'' > t-1) N_{e_{t''}}]^2}
\end{flalign*}
Therefore, we can write the same equation for $t=3,\ldots, T$
\begin{flalign*}
\frac{2}{(N_{e_T})^2} &= \frac{2}{(N_{e_{T-1}})^2} - \frac{1}{[N_0 + \sum_{t''=2}^T \iv(t'' > T-1) N_{e_{t''}}]^2} \\
\frac{2}{(N_{e_{T-1}})^2} &= \frac{2}{(N_{e_{T-2}})^2} - \frac{1}{[N_0 + \sum_{t''=2}^T \iv(t'' > T-2) N_{e_{t''}}]^2} \\
&\vdots \\
\frac{2}{(N_{e_3})^2} &= \frac{2}{(N_{e_{2}})^2} - \frac{1}{[N_0 + \sum_{t''=2}^T \iv(t'' > 2) N_{e_{t''}}]^2} 
\end{flalign*}
Add all these equations on the left and right sides, we have
\begin{flalign}\label{eq: iter sum}
\frac{2}{(N_{e_T})^2} = \frac{2}{(N_{e_{2}})^2} - \sum_{t'=2}^{T-1} \frac{1}{[N_0 + \sum_{t''=2}^T \iv(t'' > t') N_{e_{t''}}]^2}
\end{flalign}
Also, we have
\begin{flalign}
\frac{d\rho}{dN_0} &= \frac{T-1}{[N - N_0 -\sum_{t'=2}^T N_{e_{t'}}]^2} 
- \sum_{t'=2}^T \frac{1}{[N_0 + \sum_{t''=2}^T \iv(t'' > t') N_{e_{t''}}]^2} =0 \label{eq: part N0}\\
\frac{d\rho}{dN_{e_2}} &= \frac{T-1}{[N - N_0 -\sum_{t'=2}^T N_{e_{t'}}]^2} 
- \frac{2}{(N_{e_2})^2} =0 \label{eq: part N2}
\end{flalign}
Plug these two equations to Equation~\ref{eq: iter sum}, we have
\begin{flalign*}
\frac{2}{(N_{e_T})^2} = \frac{1}{(N_0)^2} \implies N_{e_T} = N_0 \sqrt{2}c_T
\end{flalign*}
where $c_T=1$. Plug this to the $\frac{2}{(N_{e_T})^2} = \frac{2}{(N_{e_{T-1}})^2} - \frac{1}{[N_0 + N_{e_T}]^2}$, we have
\begin{flalign*}
N_{e_{T-1}} = N_0 \sqrt{2} c_{T-1},
\end{flalign*}
where $c_{T-1} = \bigg[\frac{1}{c_T^2} + \frac{1}{(1+\sqrt{2}c_T)^2} \bigg]^{-1/2}$. So, do the same operation for other equations, we have
\begin{flalign*}
N_{e_t} = N_0 \sqrt{2} c_t, \quad t=2,\ldots,T
\end{flalign*}
where 
\begin{flalign*}
c_t &= \bigg[ \frac{1}{c_{t+1}^2} + \frac{1}{(1+ \sqrt{2} \sum_{t' > t} c_{t'})^2} \bigg]^{-1/2} 
\quad t = 2, \ldots, T-1 \\
c_T &= 1
\end{flalign*}
Plug these $\{N_{e_t}\}$ to Equation~\ref{eq: part N2}, we have
\begin{flalign*}
\frac{T-1}{[N - N_0 -\sum_{t'=2}^T N_{e_{t'}}]^2} 
- \frac{2}{(N_{e_2})^2} = 0 \implies \frac{2}{2c_2^2N_0^2} = \frac{T-1}{[N - N_0 - \sqrt{2}_{t=2}^Tc_t N_0 ]^2}
\end{flalign*}
Solve this equation, we get 
\begin{flalign*}
N_0 = \bigg[ 1 + (\sqrt{T-1} + \sqrt{2}) c_2 + \sqrt{2}\sum_{t=3}^T c_t  \bigg]^{-1}
\end{flalign*}
Finally, we have
\begin{flalign*}
N_1 &= N - N_0 \bigg[ 1 + \sqrt{2} \sum_{t=2}^T c_t \bigg]
\end{flalign*}
\end{proof}

%%%%%%%%%%%
\section{Proof of results in Section~\ref{sec:weighted-loss}}
%%%%%%%%%%%

\subsection{Proof of Theorem~\ref{th:minimax-weighted}}
%%%%%%%%%%%

Lemma~\ref{lemma:loss} through Lemma~\ref{lemma:crd} hold trivially with the
weighted loss function. Adapting Lemma~\ref{lemma:max-crd} requires more care.
\begin{lemma}[Analog to Lemma~\ref{lemma:max-crd}]\label{lemma:max-crd-weighted}
  Let $\eta \in \PulsesDesign$ and $\mbY(\Yset)$ where $\Yset$ is bounded and
  permutation-invariant, then:
  \begin{equation*}
    \max_{\mY \in \mbY(\Yset)} V^\ast \sum_{\bZ \in \PulsesSupport}
    \bigg[ \rho \frac{T-1}{N_1(\bZ)}
    + (1-\rho) \sum_{t=2}^T\frac{1}{N_t(\bZ)}
    + \sum_{t=2}^T \frac{1}{N_{e_t}(\bZ)}
    \bigg]
  \end{equation*}
  where $V^\ast$ is as in Lemma~\ref{lemma:max-crd}.
\end{lemma}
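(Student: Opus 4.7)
The plan is to adapt the proof of Lemma~\ref{lemma:max-crd} to the weighted loss of Equation~\eqref{eq:wloss2} and the augmented-control estimator $\hat{\gamma}_t$. All of the symmetrization apparatus (Lemmas~\ref{lemma:loss}--\ref{lemma:crd}) carries over verbatim once I verify that $\hat\gamma_t$ is permutation-equivariant; this is immediate because $\Cset^{(t)}$ is defined purely by the set of treatment labels $\{\zero,\be_{t'}: t'>t\}$, which transforms covariantly under relabeling of units. Consequently, by Lemma~\ref{lemma:crd} it suffices to analyze
\[
r(\tilde\delta_\bZ;\mY) = \rho\sum_{t=2}^T V_{\tilde\delta_\bZ}(\hat\lambda_t) + (1-\rho)\sum_{t=2}^T V_{\tilde\delta_\bZ}(\hat\gamma_t)
\]
for a single CRD $\tilde\delta_\bZ$ --- the bias terms vanish since both estimators are unbiased under complete randomization --- and then maximize over $\mY\in\mbY(\Yset)$.

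The variance of $\hat\lambda_t$ was already computed in Lemma~\ref{lemma:max-crd}. The new ingredient is the variance of $\hat\gamma_t$, which I would derive by exploiting Assumption~\ref{a:no-anticipating}: for every $i\in\Cset^{(t)}$, the observed outcome at time $t$ equals $Y_{it}(\zero)$ regardless of whether $i$ was assigned to $\zero$ or to some $\be_{t'}$ with $t'>t$. Therefore, as far as $\hat\gamma_t$ is concerned, the augmented control acts as a single pseudo-control arm of size $N_t(\bZ)$, and the standard Neymanian two-arm formula yields
\[
V_{\tilde\delta_\bZ}(\hat\gamma_t) = \frac{V^{(t)}_{e_t}}{N_{e_t}(\bZ)} + \frac{V^{(t)}_0}{N_t(\bZ)} - \frac{V^{(t)}_{0,e_t}}{N}.
\]
Inserting this into the weighted sum, the coefficient on $V^{(t)}_{e_t}/N_{e_t}(\bZ)$ becomes $\rho + (1-\rho) = 1$, which is exactly how the mixture structure --- weight $\rho$ on the $1/N_1$ terms, weight $(1-\rho)$ on the $1/N_t$ terms, and weight $1$ on the $1/N_{e_t}$ terms --- appears in the claim.

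Finally, the maximization over $\mY\in\mbY(\Yset)$ proceeds exactly as in Lemma~\ref{lemma:max-crd}: pick any $\by\in\argmax_{\bx\in\Yset}(N-1)^{-1}\sum_i(\bx_i-\bar{\bx})^2$ and define $\mY^{\opt}$ to be the schedule whose every column equals $\by$. This schedule simultaneously attains $V^{(t)}_h = V^\ast$ for every $h\in\{\zero,\one,\be_t\}$ and every $t$, while forcing $V^{(t)}_{0,e_t} = V^{(t)}_{1,e_t} = 0$. Since the negated covariance terms are maximized at $0$ (they appear with a minus sign and the underlying quantities are nonnegative), $\mY^{\opt}$ jointly attains the pointwise maximum of every additive contribution, and the expression collapses to the claimed formula.

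The main obstacle is the variance derivation for $\hat\gamma_t$: the augmented control $\Cset^{(t)}$ is a union of arms inside a multi-arm CRD rather than a single arm, so I have to justify that the usual two-arm Neyman formula applies. The justification is that, conditional on the unordered partition $\{\be_t\text{-arm},\ \Cset^{(t)}\}$, Assumption~\ref{a:no-anticipating} makes the sub-partition of $\Cset^{(t)}$ into $\zero$ and $\{\be_{t'}: t'>t\}$ ancillary for the observed outcomes at time $t$, so $\hat\gamma_t$ is distributionally equivalent to a two-arm difference-in-means on $N_{e_t}+N_t$ units embedded in a population of $N$, which is precisely the setting covered by the standard Neymanian calculation used in Lemma~\ref{lemma:max-crd}.
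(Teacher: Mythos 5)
Your proposal is correct and follows essentially the same route as the paper: decompose the risk of the symmetrized design into bias (zero) plus weighted Neymanian variances, observe that the coefficient on $V^{(t)}_{e_t}/N_{e_t}$ is $\rho+(1-\rho)=1$, and maximize over $\mbY(\Yset)$ with a constant-column schedule that attains $V^\ast$ while zeroing the covariance terms. The one place you are more explicit than the paper --- justifying the two-arm Neyman variance for $\hat\gamma_t$ by arguing that $\Cset^{(t)}$ behaves as a single completely randomized pseudo-control arm of size $N_t$ under Assumption~\ref{a:no-anticipating} --- is exactly the argument the paper delegates to its analog lemma for the recycling estimator, so no gap remains.
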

\begin{proof}[Proof of Lemma~\ref{lemma:max-crd-weighted}]
  The proof requires to make some of same modifications made in the proof of
  Lemma~\ref{lemma:max-crd-alt}. Here, we give a high-level view of some of the changes
  in the proof. We have:
  \begin{flalign*}
    r(\tilde{\delta}_{\bZ}; \mY)
    &= \sum_{t=2}^T \bigg( \rho \text{Bias}(\hat{\lambda}_t)
    + (1-\rho) \text{Bias}(\hat{\delta}_t)\bigg)
    + \sum_{t=2}^T \bigg( \rho \text{Var}_{\hat{\delta}_\bZ}(\hat{\lambda}_t)
    + (1-\rho) \text{Var}_{\hat{\delta}_\bZ}(\hat{\delta}_t)\bigg)\\
    &= \sum_{t=2}^T \bigg( \rho \text{Var}_{\hat{\delta}_\bZ}(\hat{\lambda}_t)
    + (1-\rho) \text{Var}_{\hat{\delta}_\bZ}(\hat{\delta}_t)\bigg)
  \end{flalign*}
  since the bias is zero under complete randomization. We therefore have (as in the
  proof of Lemma~\ref{lemma:max-crd}):
  \begin{equation*}
    r(\tilde{\eta}; \mY)
    = \sum_{t=2}^T \bigg[
    V_1^{(t)} \sum_{\bZ \in \PulsesSupport} \rho \frac{\eta(\bZ)}{N_1(\bZ)}
    + V_0^{(t)} \sum_{\bZ \in \PulsesSupport} (1-\rho) \frac{\eta(\bZ)}{N_t(\bZ)}
    + V_{e_t}^{(t)} \sum_{\bZ \in \PulsesSupport} (\rho+1-\rho)
    \frac{\eta(Z)}{N_{e_t}(\bZ)}
    - \frac{V_{0,e_t}}{N} - \frac{V_{1,e_t}}{N}\bigg]
  \end{equation*}
  and it follows that:
  \begin{equation*}
    \max_{\mY \in \mbY(\Yset)} V^\ast \sum_{\bZ \in \PulsesSupport}
    \bigg[ \rho \frac{T-1}{N_1(\bZ)}
    + (1-\rho) \sum_{t=2}^T\frac{1}{N_t(\bZ)}
    + \sum_{t=2}^T \frac{1}{N_{e_t}(\bZ)}
    \bigg]
  \end{equation*}
  which concludes the proof.
\end{proof}

\begin{proof}[Proof of Theorem~\ref{th:minimax-weighted}]
  The proof follows exactly the lines of the proof of Theorem~\ref{th:minimax-design}
  with the modified lemmas.
\end{proof}

\subsection{Proof of Proposition~\ref{prop:relaxation-weighted}}
%%%%%%%%%%%

The proof is similar to that of Proposition~\ref{prop:solution-relaxation}, so we
focus on the parts that change:
\begin{proof}[Proof of Proposition~\ref{prop:relaxation-weighted}]
  Let
  \begin{equation*}
    \phi(N_0, N_{e_2}, \ldots, N_{e_T})
    = \rho \frac{T-1}{N - N_0 - \sum_{t'=2}^T N_{e_{t'}}}
    + \sum_{t'=2}^T \frac{1}{N_{e_{t'}}}
    + (1-\rho) \sum_{t'=2}^T \frac{1}{N_0 + \sum_{t''=2}^T
      \iv\{t'' > t'\} N_{e_{t''}}}
  \end{equation*}
  where we use $\phi$ instead of $\rho$ for the objective function, since $\rho$ is
  already used to denote the weight in this section. Now notice that:
  \begin{equation*}
    \frac{d \phi}{d N_{e_t}} =
    \rho \frac{T-1}{(N - N_0 - \sum_{t'=2}N_{e_{t'}})^2}
    - \frac{1}{N_{e_t}^2}
    - \sum_{t'=2}^T \iv\{t' < t\} (1-\rho) \frac{1}{(N_0 + \sum_{t''=2}^T
      \iv\{t'' > t'\} N_{e_{t''}})^2}
  \end{equation*}
  and
  \begin{equation*}
    \frac{d \phi}{d N_{e_{t-1}}} =
    \rho \frac{T-1}{(N - N_0 - \sum_{t'=2}N_{e_{t'}})^2}
    - \frac{1}{N_{e_{t-1}}^2}
    - \sum_{t'=2}^T \iv\{t' < t-1\} (1-\rho) \frac{1}{(N_0 + \sum_{t''=2}^T
      \iv\{t'' > t'\} N_{e_{t''}})^2}
  \end{equation*}
  and so:
  \begin{flalign*}
    \frac{d \phi}{dN_{e_t}} = \frac{d \phi}{dN_{e_{t-1}}} = 0
    &\quad \Leftrightarrow \quad
    \frac{d \phi}{dN_{e_t}} - \frac{d \phi}{dN_{e_{t-1}}} = 0 \\
    &\quad \Leftrightarrow \quad
     \frac{1}{N_{e_{t-1}}^2} 
    - \frac{1}{N_{e_t}^2}
    - (1-\rho) \frac{1}{(N_0 + \sum_{t''=2}^T \iv\{t'' > t-1\} N_{e_{t''}})^2} = 0\\
    &\quad \Leftrightarrow \quad
     \frac{1}{N_{e_t}^2} = 
    \frac{1}{N_{e_{t-1}}^2} 
    - (1-\rho) \frac{1}{(N_0 + \sum_{t''=2}^T \iv\{t'' > t-1\} N_{e_{t''}})^2}.
  \end{flalign*}
  By telescoping the sums, we have
  \begin{equation*}
    \frac{1}{N_{e_T}^2} = \frac{1}{N_{e_2}^2}
    - (1-\rho) \sum_{t'=2}^{T-1}
    \frac{1}{(N_0 + \sum_{t''=2}^T \iv\{t''>t'\}N_{e_{t''}})^2}
  \end{equation*}
  In addition, we have:
  \begin{flalign}
    \frac{d\rho}{dN_0} &= \rho \frac{T-1}{[N - N_0 -\sum_{t'=2}^T N_{e_{t'}}]^2} 
    - (1-\rho) \sum_{t'=2}^T
    \frac{1}{[N_0 + \sum_{t''=2}^T \iv(t'' > t') N_{e_{t''}}]^2} \label{eq:N0}\\
    \frac{d\rho}{dN_{e_2}} &= \rho \frac{T-1}{[N - N_0 -\sum_{t'=2}^T N_{e_{t'}}]^2} 
    - \frac{1}{N_{e_2}^2}
  \end{flalign}
  Combining with the previous equation, as in the proof of
  Proposition~\ref{prop:solution-relaxation} yields:
  \begin{equation*}
    \frac{1}{N_{e_T}^2} = (1-\rho) \frac{1}{N_0^2}
  \end{equation*}
  and so $N_{e_T} = N_0 \ell c_T$ where $\ell = (1-\rho)^{-1/2}$ and $c_T = 1$.
  Plugging this into the recursive definition of $N_{e_t}$, starting from
  $N_{e_T}$ we get
  \begin{flalign*}
    \frac{1}{N_{e_{T-1}}^2}
    &= \frac{1}{N_{e_T}^2} + \frac{(1-\rho)}{(N_0 + N_{e_T})^2}\\
    &= \frac{1}{N_0^2}
    \bigg[ \frac{1}{(\ell c_T)^2} + \frac{(1-\rho)}{(1+\ell c_T)} \bigg]
  \end{flalign*}
  which implies:
  \begin{flalign*}
    N_{e_{T-1}}
    &= N_0 (1-\rho)^{-1/2} \bigg[ \frac{1}{(1-\rho)\ell^2 c_T^2}
    + \frac{1}{(1 + \ell c_T)^2}\bigg]^{-1/2}\\
    &= N_0 \ell \bigg[ \frac{1}{c_T^2} + \frac{1}{(1+\ell c_T)^2}\bigg]^{-1/2}\\
    &= N_0 \ell c_{T-1}
  \end{flalign*}
  where $c_{T-1} = \bigg[ \frac{1}{C_T^2} + \frac{1}{1 + \ell c_T}\bigg]^{-1/2}$. Now
  reasoning by recurrence on $t$, assume that $N_{e_{t'}} = N_0 \ell c_{t'}$ with
  $c_{t'} = \bigg[ \frac{1}{c_{t'+1}^2}
  + \frac{1}{(1+\ell \sum_{t''>t'} c_{t''})^2}\bigg]^{-1/2}$ for all $t' \geq t$. We
  then have:
  \begin{flalign*}
    \frac{1}{N_{e_{t-1}}^2}
    &= \frac{1}{N_{e_t}^2}
    + (1-\rho) \frac{1}{(N_0 + \sum_{t'=2}^T \iv\{t' > t-1\} N_{e_{t'}})^2}\\
    &= \frac{1}{N_0^2 \ell^2 c_t^2}
    + (1-\rho) \frac{1}{(N_0 + \sum_{t'=2}^T \iv\{t' > t-1\} N_0 \ell c_{t'})^2}\\
    &= \frac{1}{N_0^2}(1-\rho) \bigg[
    \frac{1}{(1-\rho) \ell^2 c_t^2}
    + \frac{1}{(1 + \sum_{t'=2}^T \iv\{t' > t-1\} \ell c_{t'})^2}\bigg] \\
    &= \frac{1}{N_0^2}(1-\rho) \bigg[
    \frac{1}{c_t^2}
    + \frac{1}{(1 + \sum_{t'=2}^T \iv\{t' > t-1\} \ell c_{t'})^2}\bigg] \\
  \end{flalign*}
  and therefore:
  \begin{flalign*}
    N_{e_{t-1}}^2
    &= N_0 (1-\rho)^{-1/2} \bigg[
    \frac{1}{c_t^2}
    + \frac{1}{(1 + \sum_{t'=2}^T \iv\{t' > t-1\} \ell c_{t'})^2}\bigg]^{-1/2}\\
    &= N_0 \ell c_{t-1}.
  \end{flalign*}
  This proves that $N_{e_t} = N_0 \ell c_t$ for all $t=2, \ldots, T$. Now since all
  units are assigned to exactly one arm, we have:
  \begin{flalign*}
    N_0 + N_1 + \sum_{t=2}^T N_{e_t} = N
    &\quad \Leftrightarrow \quad N_1 = N - N_0 - \sum_{t=2}^T N_{e_t}\\
    &\quad \Leftrightarrow \quad N_1 = N - N_0 \bigg[ 1 + \ell \sum_{t=2}^T c_t\bigg].   
  \end{flalign*}
  The last step is to obtain an expression for $N_0$ as a function of $\{c_t\}_{t=2}^T$
  and $\ell$. Notice that:
  \begin{flalign*}
    \frac{d\rho}{dN_{e_2}} = 0
    &\quad \Leftrightarrow \quad
    \rho \frac{T-1}{(N - N_0 - \sum_{t'=2}N_{e_{t'}})^2} - \frac{1}{N_{e_2}^2} = 0\\
    &\quad \Leftrightarrow \quad
    N_{e_2} = [(T-1)\rho]^{-1/2} \bigg[ N - N_0 ( 1 + \ell\sum_{t'=2}^T c_{t'})\bigg]\\
    &\quad \Leftrightarrow \quad
     [(T-1)\rho]^{1/2} N_{e_2} = N - N_0 ( 1 + \ell\sum_{t'=2}^T c_{t'})\\  
    &\quad \Leftrightarrow \quad
     [(T-1)\rho]^{1/2}N_0 \ell c_2 + N_0 ( 1 + \ell\sum_{t'=2}^T c_{t'}) = N\\
    &\quad \Leftrightarrow \quad
    N_0 \bigg[1 +  [(T-1)\rho]^{1/2} \ell c_2 + \ell \sum_{t'=2}^T c_{t'}\bigg] = N\\
    &\quad \Leftrightarrow \quad
    N_0 = N\bigg[1+[(T-1)\rho]^{1/2} \ell c_2 + \ell \sum_{t'=2}^T c_{t'}\bigg]^{-1}\\
    &\quad \Leftrightarrow \quad
    N_0 = N\bigg[1+([(T-1)\rho]^{1/2} + 1) \ell c_2 +
    \ell \sum_{t'=3}^T c_{t'}\bigg]^{-1}\\   
  \end{flalign*}
  which completes the proof.
\end{proof}

%%%%%%%%%%%
\section{Proof of results in Section~\ref{sec:recycling}}
\label{appendix:recycling}
%%%%%%%%%%%

The proof of Theorem~\ref{th:recycling} follows the same lines as that of
Theorem~\ref{th:minimax-design}, with a few modifications. We first state and prove an
analog to Lemma~\ref{lemma:loss}. Lemmas~\ref{lemma:symmetrization} and \ref{lemma:crd}
carry through unchanged (they only depend on the invariance of the loss). We then
prove a slightly modified version of Lemma~\ref{lemma:max-crd}. The proof of
Theorem~\ref{th:recycling} follows directly from these modified lemmas.
In this section, we use the loss function:
\begin{equation*}
  L(\bZ, \mY) = \sum_{t=2}^T(\hat{\lambda}_t -
  \lambda_t)^2 + \sum_{t=2}^T(\hat{\beta}_t - \delta_t)^2
\end{equation*}
where:
\begin{equation*}
\hat{\beta}_t = \frac{1}{N_{e_t}} \sum_i^N \iv(\bZ_i = \be_t)Y_{it}(\be_t) -
\frac{1}{N_{tk}} \sum_i^N \iv(i \in \Cset_k^{(t)}) Y_{it}(\bZ_i)
\end{equation*}
with $N_{tk} = |\Cset_k^{(t)}|$. Under Assumption~\ref{a:k-recycling}, $\hat{\beta}_t$
can be rewritten:
\begin{equation*}
\hat{\beta}_t = \frac{1}{N_{e_t}} \sum_i^N \iv(\bZ_i = \be_t)Y_{it}(\be_t) -
\frac{1}{N_k^t} \sum_i^N \iv(i \in \Cset_k^{(t)}) Y_{it}(\zero).
\end{equation*}
We now state a slightly modified version of Lemma~\ref{lemma:loss}.
\begin{lemma}[Analog to Lemma~\ref{lemma:loss}]\label{lemma:loss-alt}
	Let $\pi \in \Sset_N$, $\bZ$ and $\mY$. Then:
	\begin{equation*}
		L(\pi\cdot \bZ, \pi \cdot \mY) = L(\bZ, \mY)
	\end{equation*}
\end{lemma}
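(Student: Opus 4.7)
The plan is to mimic the structure of the proof of Lemma~\ref{lemma:loss}, decomposing $L = L_\lambda + L_\beta$ where $L_\beta(\bZ, \mY) = \sum_{t=2}^T (\hat{\beta}_t - \delta_t)^2$. Since $L_\lambda$ does not involve $\hat{\beta}_t$, the argument from Lemma~\ref{lemma:loss} applies verbatim to show $L_\lambda(\pi \cdot \bZ, \pi \cdot \mY) = L_\lambda(\bZ, \mY)$, so the new content reduces to establishing $L_\beta(\pi \cdot \bZ, \pi \cdot \mY) = L_\beta(\bZ, \mY)$. For this it suffices to prove two facts: (i) $\delta_t(\pi \cdot \mY) = \delta_t(\mY)$, which is Equation~\eqref{eq:estimand} from the proof of Lemma~\ref{lemma:loss} and carries over unchanged; and (ii) $\hat{\beta}_t(\pi \cdot \bZ, \pi \cdot \mY) = \hat{\beta}_t(\bZ, \mY)$, which is the only step requiring new work.

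For (ii), I would split $\hat{\beta}_t = \hat{\beta}_t^{(\be_t)} - \hat{\beta}_t^{(\zero)}$ along the two sums in its definition. The first piece, $\hat{\beta}_t^{(\be_t)}(\bZ, \mY) = N_{e_t}^{-1} \sum_i \iv(\bZ_i = \be_t) Y_{it}(\be_t)$, is identical to $\hat{\delta}_t^{(\be_t)}$ in the proof of Lemma~\ref{lemma:loss}, and the change-of-variables argument there (re-indexing $j = \pi^{-1}(i)$ and using $\pi^{-1} \cdot \bI = \bI$) gives $\hat{\beta}_t^{(\be_t)}(\pi \cdot \bZ, \pi \cdot \mY) = \hat{\beta}_t^{(\be_t)}(\bZ, \mY)$.

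The heart of the proof is the second piece, which depends on the augmented-control set $\Cset_k^{(t)}(\bZ) = \{i : \bZ_i = \zero \text{ or } \bZ_i = \be_{t'} \text{ with } t' \le t-k \text{ or } t' > t\}$. The key observation is that membership in $\Cset_k^{(t)}$ depends on $\bZ$ only through the individual row $\bZ_i$, so
\begin{equation*}
i \in \Cset_k^{(t)}(\pi \cdot \bZ) \iff (\pi \cdot \bZ)_i \in \{\zero\} \cup \{\be_{t'} : t' \le t-k \text{ or } t' > t\} \iff \pi^{-1}(i) \in \Cset_k^{(t)}(\bZ).
\end{equation*}
Equivalently, $\Cset_k^{(t)}(\pi \cdot \bZ) = \pi \cdot \Cset_k^{(t)}(\bZ)$, and as a consequence $N_{tk}(\pi \cdot \bZ) = |\pi \cdot \Cset_k^{(t)}(\bZ)| = |\Cset_k^{(t)}(\bZ)| = N_{tk}(\bZ)$ since permutations preserve cardinality. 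With this in hand, applying the same re-indexing $j = \pi^{-1}(i)$ to
\begin{equation*}
\hat{\beta}_t^{(\zero)}(\pi \cdot \bZ, \pi \cdot \mY) = N_{tk}^{-1} \sum_{i \in \bI} \iv(i \in \Cset_k^{(t)}(\pi \cdot \bZ)) (\pi \cdot \bY(\zero))_{it}
\end{equation*}
turns the indicator into $\iv(j \in \Cset_k^{(t)}(\bZ))$ and $(\pi \cdot \bY(\zero))_{it}$ into $Y_{jt}(\zero)$; since the index set $\pi^{-1} \cdot \bI = \bI$ is unchanged, we recover $\hat{\beta}_t^{(\zero)}(\bZ, \mY)$.

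Combining the two pieces yields $\hat{\beta}_t(\pi \cdot \bZ, \pi \cdot \mY) = \hat{\beta}_t(\bZ, \mY)$, and together with (i) this gives $L_\beta(\pi \cdot \bZ, \pi \cdot \mY) = L_\beta(\bZ, \mY)$, completing the proof. The only conceptually nontrivial step is verifying $\Cset_k^{(t)}(\pi \cdot \bZ) = \pi \cdot \Cset_k^{(t)}(\bZ)$, which I expect to be the main obstacle because it requires carefully tracking how relabeling units interacts with the row-based definition of augmented controls; however, once this identity is established, the rest is bookkeeping analogous to the proof of Lemma~\ref{lemma:loss}.
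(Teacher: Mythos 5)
Your proposal is correct and follows essentially the same route as the paper's proof: both reduce the claim to the new term $\hat{\beta}_t^{(\zero)}$, establish $i \in \Cset_k^{(t)}(\pi \cdot \bZ) \Leftrightarrow \pi^{-1}(i) \in \Cset_k^{(t)}(\bZ)$ from the fact that membership depends only on the row $\bZ_i$, deduce $N_{tk}(\pi \cdot \bZ) = N_{tk}(\bZ)$, and finish by the re-indexing $j = \pi^{-1}(i)$. No gaps.
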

\begin{proof}
  The only element that is changed from Lemma~\ref{lemma:loss} is the use of
  $\hat{\beta}_t$ instead of $\hat{\delta}_t$. More specifically, under
  Assumption~\ref{a:k-recycling}, if we let:
  \begin{equation*}
    \hat{\beta}_t^{(\be_t)}(\bZ, \mY) = \frac{1}{N_{e_t}} \sum_i^N \iv(\bZ_i = \be_t)Y_{it}(\be_t); 
    \quad
    \hat{\beta}_t^{(k)}(\bZ, \mY) = \frac{1}{N_{tk}} \sum_i^N \iv(i \in \Cset_k^{(t)}) Y_{it}(\zero)
  \end{equation*}
  and $\hat{\beta}_t = \hat{\beta}_t^{(\be_t)}(\bZ, \mY) - \hat{\beta}_t^{(k)}(\bZ, \mY)$, all we need to show 
  is that $\hat{\beta}_t^{(k)}(\pi \cdot \bZ, \pi \cdot \mY) = \hat{\beta}_t^{(k)}(\bZ,\mY)$ 
  for all $\pi$ and all $t=2, \ldots, T$, since this is the only new term.
  
  \medskip
  
  First, notice that:
  \begin{flalign*}
    i \in \Cset_k^{(t)}(\pi \cdot \bZ)  
    &\quad \Leftrightarrow \quad 
    (\pi \cdot \bZ)_i = \zero \,\, \text{ or } \,\,(\pi \cdot \bZ)_i = \be_{t'} \,\, \forall t' \in \Tset_k^{(t)} \\
    &\quad \Leftrightarrow \quad 
    \bZ_{\pi^{-1}(i)} = \zero \,\, \text{ or } \,\, \bZ_{\pi^{-1}(i)} = \be_{t'} \,\, \forall t' \in \Tset_k^{(t)} \\
    &\quad \Leftrightarrow \quad 
    \pi^{-1}(i) \in \Cset_k^{(t)}(\bZ).
  \end{flalign*}
  This implies that
  $\iv(i \in \Cset_k^{(t)}(\pi \cdot \bZ)) = \iv(\pi^{-1}(i) \in \Cset_k^{(t)})$.
  Moreover, since $\pi$ is a permutation, this also implies that:
  \begin{equation*}
    N_{tk}(\bZ)
    = |\Cset_k^{(t)}(\bZ)|
    = |\Cset_k^{(t)}(\pi \cdot \bZ)|
    = N_{tk}(\pi \cdot \bZ)
  \end{equation*}
  and therefore:
  \begin{flalign*}
    \hat{\beta}_t^{(k)}(\pi \cdot \bZ, \pi \cdot \mY) &= 
    \frac{1}{N_{tk}(\pi \cdot \bZ)} \sum_{i\in \bI} \iv(i \in \Cset_k^{(t)}(\pi \cdot \bZ)) (\pi \cdot \bY(\zero))_i^{(t)} \\
    &= \frac{1}{N_{tk}(\bZ)}  \sum_{i\in \bI} \iv(\pi^{-1}(i) \in \Cset_k^{(t)}(\bZ)) Y_{\pi^{-1}(i)}^{(t)}(\zero) \\
    &= \frac{1}{N_{tk}(\bZ)}  \sum_{j\in \pi^{-1} \bI} \iv(j \in \Cset_k^{(t)}(\bZ)) Y_{j}^{(t)}(\zero) \\
    &= \frac{1}{N_{tk}(\bZ)}  \sum_{j\in \bI} \iv(j \in \Cset_k^{(t)}(\bZ)) Y_{j}^{(t)}(\zero) \\
    &= \hat{\beta}_t^{(k)}(\bZ, \mY).
  \end{flalign*}
\end{proof}

%%%%

\begin{lemma}\label{lemma:max-crd-alt}[Analogous to Lemma~\ref{lemma:max-crd}]
Let $\eta \in \PulsesDesign$ and $\mbY(\Yset)$ where $\Yset$ is bounded and permutation-invariant, then:
\begin{equation*}
	\max_{\mY \in \mbY(\Yset)} \{r(\tilde{\eta}; \mY)\} 
	= V^\ast \sum_{\bZ \in \PulsesSupport} \bigg( \frac{T-1}{N_1(\bZ)}
        + \sum_{t=2}^T \frac{1}{N_{tk}(\bZ)}
	+ 2 \sum_{t=2}^T \frac{1}{N_{e_t}(\bZ)}\bigg)
\end{equation*}
where
\begin{equation*}
	V^\ast = \max_{\bx \in \Yset} \frac{1}{N-1} \sum_{i=1}^N (\bx_i - \bar{\bx})^2
\end{equation*}
\end{lemma}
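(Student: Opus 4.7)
The approach is to mirror the proof of Lemma~\ref{lemma:max-crd}, changing only the ingredients that depend on the choice of instantaneous-effect estimator. Lemmas~\ref{lemma:symmetrization} and \ref{lemma:crd} go through unchanged: the former relies only on the permutation invariance of the loss (Lemma~\ref{lemma:loss-alt}), and the latter depends only on the structure of $\PulsesDesign$. Applying Lemma~\ref{lemma:crd} gives
$$r(\tilde\eta;\mY)=\sum_{\bZ\in\PulsesSupport}\eta(\bZ)\,r(\tilde\delta_\bZ,\mY),$$
reducing the problem to evaluating the risk of a single completely randomized design $\tilde\delta_\bZ$ with prescribed counts $N_1(\bZ)$, $N_0(\bZ)$, and $\{N_{e_t}(\bZ)\}_{t=2}^T$.

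First I would verify that $\hat\beta_t$ is unbiased and compute its Neymanian variance. Under Assumptions~\ref{a:no-anticipating} and \ref{a:k-recycling} every $i\in\Cset_k^{(t)}$ contributes exactly $Y_{it}(\zero)$ to the second term of $\hat\beta_t$, and under CRD with prescribed counts the exchangeability of the design implies that $\Cset_k^{(t)}$ is distributed as a uniformly random subset of size $N_{tk}(\bZ)$. Hence $E\hat\beta_t=\delta_t$, the bias contribution vanishes, and the standard two-sample Neymanian identity for disjoint CRD groups applies and gives
$$V_{\tilde\delta_\bZ}(\hat\beta_t)=\frac{V_{e_t}^{(t)}}{N_{e_t}(\bZ)}+\frac{V_0^{(t)}}{N_{tk}(\bZ)}-\frac{V_{0,e_t}^{(t)}}{N},$$
with $V_{0,e_t}^{(t)}$ as defined in the proof of Lemma~\ref{lemma:max-crd}. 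Combining with the unchanged variance of $\hat\lambda_t$ and summing over $t$ and over $\bZ$ with weights $\eta(\bZ)$ produces a linear combination of the quantities $V_1^{(t)}$, $V_{e_t}^{(t)}$, $V_0^{(t)}$, $V_{1,e_t}^{(t)}$, $V_{0,e_t}^{(t)}$, where the $V_{e_t}^{(t)}$ term appears with coefficient $2$ (once from $\hat\lambda_t$ and once from $\hat\beta_t$).

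Finally I would maximize over $\mY\in\mbY(\Yset)$ using the construction from the proof of Lemma~\ref{lemma:max-crd}. Pick any $\by\in\argmax_{\bx\in\Yset}(N-1)^{-1}\sum_i(x_i-\bar x)^2$, attaining the value $V^\ast$, and let $\mY^\opt$ be the schedule every block of which is the $N\times T$ matrix whose columns all equal $\by$. Then $V_h^{(t)}(\mY^\opt)=V^\ast$ for all $h,t$, while the contrast variances $V_{1,e_t}^{(t)}$ and $V_{0,e_t}^{(t)}$ vanish at $\mY^\opt$; since both are non-negative and enter $r$ with a minus sign, $\mY^\opt$ simultaneously maximizes the positive terms and the negative contributions, so $\mY^\opt\in\argmax_\mY r(\tilde\eta;\mY)$. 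Substituting yields the stated identity.

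The main obstacle I anticipate is justifying the variance formula for $\hat\beta_t$, since $\Cset_k^{(t)}$ is a union of several CRD arms rather than a single arm, so the two-arm Neymanian identity does not apply verbatim. The clean resolution is the one used above: because every $i\in\Cset_k^{(t)}$ contributes the same potential outcome $Y_{it}(\zero)$, and exchangeability makes $\Cset_k^{(t)}$ a uniformly random subset of size $N_{tk}(\bZ)$, the problem collapses to a genuine two-sample Neymanian comparison and the usual finite-population formula applies without modification.
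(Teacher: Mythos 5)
Your proposal is correct and follows essentially the same route as the paper's own proof: reduce via the unchanged symmetrization and representation lemmas to the risk of a single completely randomized design, observe that the augmented control set $\Cset_k^{(t)}$ is a uniformly random subset of size $N_{tk}(\bZ)$ whose members all contribute $Y_{it}(\zero)$ (the paper phrases this by introducing a pseudo-arm $\zero^\ast$ that is "completely randomized, $N_{tk}$ out of $N$"), so the standard two-sample Neyman formula gives the stated variance with $V_{0^\ast}^{(t)}=V_0^{(t)}$, and then maximize over $\mY$ with the constant-column schedule $\mY^{\opt}$ exactly as in Lemma~\ref{lemma:max-crd}. The obstacle you flag and its resolution are precisely the one substantive modification the paper makes.
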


\begin{proof}
  We only discuss the modifications that need to be made to the proof of
  Lemma~\ref{lemma:max-crd}. 
  
  \medskip
	
  First, for $t$, define
  $\zero^\ast = \{\zero \text{ or } \be_{t'}, \,\, t' \in \Tset_k^{(t)}\}$.
  Since the treatment is completely randomized, then $\zero^\ast$ is also completely
  randomized ($N_{tk}$ out of $N$). This means that
  $\text{bias}_{\tilde{\delta}_{\bZ}}(\hat{\beta}_t, \delta_t)=0$, and:
	\begin{flalign*}
		V_{\tilde{\delta}_{\bZ}} 
		&= \frac{V_{0^\star}^{(t)}}{N_{tk}(\bZ)} + \frac{V_{e_t}^{(t)}}{N_{e_t}(\bZ)} 
		- \frac{V_{0^\ast,e_t}^{(t)}}{N} \\
		&=  \frac{V_0^{(t)}}{N_{tk}(\bZ)} + \frac{V_{e_t}^{(t)}}{N_{e_t}(\bZ)} 
		- \frac{V_{0,e_t}^{(t)}}{N}.
	\end{flalign*}

	The rest of the proof follows the lines of the proof of Lemma~\ref{lemma:max-crd}.
\end{proof}

%%%%

\begin{proof}[Proof of Theorem~\ref{th:recycling}]
  The proof follows exactly the lines of the proof of
  Theorem~\ref{th:minimax-design}, using Lemmas~\ref{lemma:loss-alt} and
  \ref{lemma:max-crd-alt} instead of Lemmas~\ref{lemma:loss} and \ref{lemma:max-crd}.
\end{proof}

%%%%%%%%%%%%%%%%%%%%%%%%%%%%%%%%%%%%%%%%%%%%%%%%%%%%%%%%%%%%%%%%%%%%%%%%%%%

%%%%%%%%%%%%%%%
\section{Randomization-inference}
\label{sec:randomization-inference}
%%%%%%%%%%%%%%%

In the randomization-based framework we adopt, the potential outcomes are considered
fixed, and the only source of randomness is the random assignment $\bZ$; in particular,
inference is performed with respect to the design $\eta$ used to randomize the
assignment. The minimax optimal designs we obtain lend themselves to straightforward
randomization-based inference since they are completely randomized experiments, and the
estimator used are differences-in-means. In particular, if $\eta^{\opt}$ is a minimax
optimal design as in Theorem~\ref{th:minimax-design}, then 
$\text{Bias}_{\eta^{\opt}}(\hat{\lambda}_t, \lambda_t) = 0$, and
$\text{Bias}_{\eta^{\opt}}(\hat{\delta}_t, \delta_t) = 0$, for $t=2, \ldots, T$. The
usual variance formulas hold \citep{imbens2015causal}:
\begin{equation*}
   \text{Var}_{\eta^{\opt}}(\hat{\lambda}_t) = 
   \frac{V^{(t)}_1}{N_1^{\opt}} + \frac{V^{(t)}_{e_t}}{N_{e_t}^{\opt}}
   - \frac{V^{(t)}_{1e_t}}{N}; \quad 
   \text{Var}_{\eta^{\opt}}(\hat{\delta}_t) = 
   \frac{V^{(t)}_0}{N_0^{\opt}} + \frac{V^{(t)}_{e_t}}{N_{e_t}^{\opt}}
   - \frac{V^{(t)}_{0e_t}}{N}; \qquad
   \text{ for } t=2, \ldots, T
\end{equation*}
where 
\begin{flalign*}
  V_{h}^{(t)} &= \frac{1}{N-1} \sum_{i=1}^N \{Y_{it}(\bh) -
  \overline{Y}_t(\bh)\}^2, \quad h = 1, 0, e_t \\
  V^{(t)}_{he_t} &= \frac{1}{N-1} \sum_i^N 
  \{Y_{it}(\bh) - \overline{Y}_t(\bh))\} \{Y_{it}(\be_t)
  - \overline{Y}_t (\be_t))\}, \quad h = 1, 0
\end{flalign*}
If $\eta^{\opt}$ is a minimax optimal design as in Theorem~\ref{th:partial-recycling},
similar results hold but with
\begin{equation*}
   \text{Var}_{\eta^{\opt}}(\hat{\gamma}_t) = 
   \frac{V^{(t)}_0}{N_t^{\opt}} + \frac{V^{(t)}_{e_t}}{N_{e_t}^{\opt}}
   - \frac{V^{(t)}_{0e_t}}{N}; \qquad
   \text{ for } t=2, \ldots, T  
\end{equation*}
The variances of the estimators under the designs obtained in our other theorems can
be obtained similarly. The standard conservative estimators of these variance elements
can be obtained \citep{imbens2015causal} and since $\hat{\lambda}_t$, $\hat{\delta}_t$
and $\hat{\gamma}_t$ are asymptotically normal under mild conditions, conservative
confidence intervals can be constructed.

%%%%%%%%%%%%%%%%%%%%%%%%%%%%%%%%%%%%%%%%%%%%%%%%%%%%%%%%%%%%%%%%%%%%%%%%%%%

%%%%%%%%%%%
\section{Expected risk}
\label{subsection:risk}
%%%%%%%%%%%

% PTA - changed a bit . check
% GWB – again, I'd rather put that in the appendix
Our theory predicts that that our minimax design will minimize the maximum
risk –– the previous set of simulations illustrated the magnitude of the
reduction. However, our theory does not offer any sort of guarantee on the expected
risk under any specific outcome model. The simulations in this section aim to 
give some insights into how our designs perform in terms of expected risk under two simple outcome models. 
\begin{itemize}
\item \texttt{Standard}: Consider the following model:
  \begin{equation}\label{eq:model1}
    Y_{it}(\bZ) = \mu + \alpha_i + \beta_t + Z_{it} \delta + \gamma_{Z_{i(t-1)}} +
    \epsilon_{it},
 \end{equation}
 where $\alpha_i$ and $\beta_t$ are fixed effects associated to unit $i$ and time period
 $t$, respectively. It can be verified that under this model, the expected value of the
 instantaneous effect is constant and equal to  $\mu + \delta$. The term $\gamma_{Z_{i(t-1)}}$
 captures the residual effect from the previous time step. 
\item \texttt{Habituation}: Consider the following model:
  \begin{equation}\label{eq:model3}
    Y_{it}(\bZ) = \mu + \alpha_i + \beta_t + Z_{it}\delta - Z_{it} Z_{i(t-1)} \rho \delta
    + \epsilon_{it},
  \end{equation}
  where $\rho\in[0, 1)$ represents a decay in treatment efficacy if the treatment
  is repeated between successive time periods.
\end{itemize}

We set the parameters values to $\mu=0$, 
$\alpha_i=\log(i)$, $\beta_t =\log(t)$, $\delta=1.0$, $\gamma=-1$, $\rho=0.5$, $\epsilon_{it}\sim \mathcal{N}(0,4^2)$. The results presented below appear to be quite 
robust to different parameter specifications.~\footnote{We also tried
$\alpha_i = \sqrt{i}, \beta_t = \sqrt{t}$, and $\delta$ from 1 to 5 and $\gamma$ from -5 to -1, and their combinations, without seeing any qualitative change in the results.}
We consider settings with $N = 100, 200, 500$ units, $T = 10, 15, 20, 25, 30$ maximum time periods, and perform $100$ runs for every experimental setting. 

Figure~\ref{fig:risk} displays the distribution of the risk values defined in Equation~\eqref{eq:risk} for the minimax and BCRD designs. From the results of Figure~\ref{fig:risk}, we see that the minimax design achieves, in general, smaller risk values than BCRD, especially as the number of units, $N$,
increases. 
% The difference can be as much as fourfold in favor of the minimax design.
For fixed $N$, an increase in $T$ generally leads to an increase in expected risk values. 
This is expected since there are effectively fewer data to estimate the individual
estimands, $\lambda_t, \delta_t$~(the effect is more evident 
in the $N=100$ panel). This also explains why the variance for both designs decreases in
general as $N$ increases. 

\begin{figure}[!t]
	\centering
	\includegraphics[width=1.1\textwidth]{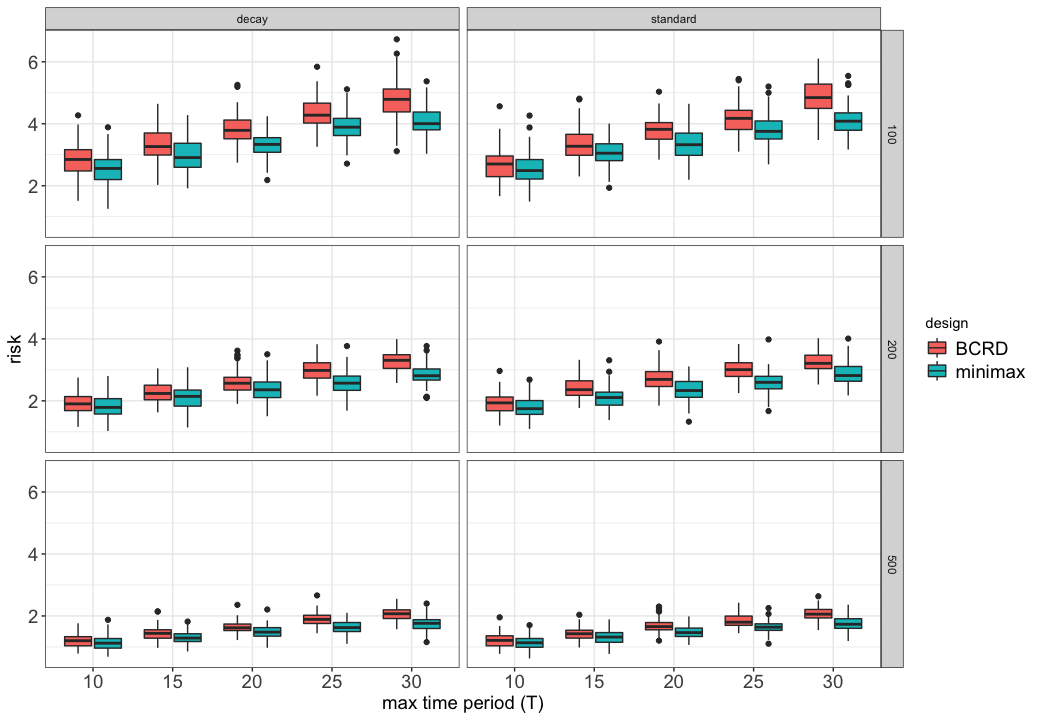}
	\vspace{-0.1in}\caption{Risk of the estimators for the minimax design 
	and the BCRD design with the two outcome models defined in Equation~\eqref{eq:model1}
	and Equation~\eqref{eq:model3}.}
	\label{fig:risk}  
\end{figure}

% PTA -- not sure about this.
% GWB –– ok with commenting out
% In summary, the main takeaway from the experiments of this section is that our minimax design not only performs better than the BCRD in a worst-case sense, which is guaranteed by theory -- it also seems to perform well in the average sense compared to the standard, completely randomized design.

%%%%%%%%%%%

\end{document}